    \pgfplotsset{compat=1.18}
\newcommand{\cE}{\mathcal{E}}
\newcommand{\cT}{\mathcal{T}}
\newcommand{\E}{\mathbb{E}}
\newcommand{\I}{\mathbb{I}}
\newcommand{\N}{\mathbb{N}}
\newcommand{\Pb}{\mathbb{P}}
\newcommand{\bbR}{\mathbb{R}}
 \newcommand{\e}{\varepsilon}
\newcommand{\fhi}{\varphi}
\newcommand{\lrb}[1]{\left(#1\right)}
\newcommand{\brb}[1]{\bigl(#1\bigr)}
\newcommand{\lsb}[1]{\left[#1\right]}
\newcommand{\bsb}[1]{\bigl[#1\bigr]}
\newcommand{\Bsb}[1]{\Bigl[#1\Bigr]}
\newcommand{\bbsb}[1]{\biggl[#1\biggr]}
\newcommand{\lcb}[1]{\left\{#1\right\}}
\newcommand{\bcb}[1]{\bigl\{#1\bigr\}}
\newcommand{\labs}[1]{\left\lvert#1\right\rvert}
\DeclareMathOperator*{\argmax}{argmax}
\newcommand{\dif}{\,\mathrm{d}}
\newcommand{\fracc}[2]{#1/#2}
\newcommand{\s}{\subset}
\newcommand{\mypapertitle}{Fair Online Bilateral Trade}
\newcommand{\fgft}{\textsc{fgft}}
\newcommand{\gft}{\textsc{gft}}
\renewcommand{\tilde}{\widetilde}
\newcommand{\Otilde}{\tilde{\mathcal{O}}}
\newcommand{\scO}{\mathcal{O}}
\newcommand{\Thetatilde}{\tilde{\Theta}}
\renewcommand{\hat}{\widehat}
\newtheorem{theorem}{Theorem}
\newtheorem{lemma}{Lemma}
\newtheorem{remark}{Remark}
\title{\mypapertitle}
\author{%
    \textbf{Fran\c{c}ois Bachoc} \\
    Institut de Math\'ematiques de Toulouse, Universit\'e Paul Sabatier, Toulouse, France
    \\
    \textbf{Nicol\`o Cesa-Bianchi} \\
    Politecnico di Milano \& Universit\`a degli Studi di Milano, Milano, Italy
    \\
    \textbf{Tommaso Cesari} \\
    University of Ottawa, Ottawa, Canada
    \\
    \textbf{Roberto Colomboni} \\
    Politecnico di Milano \& Universit\`a degli Studi di Milano, Milano, Italy
    \\
}
\begin{document}

\maketitle

\begin{abstract}
In online bilateral trade, a platform posts prices to incoming pairs of buyers and sellers that have private valuations for a certain good. If the price is lower than the buyers' valuation and higher than the sellers' valuation, then a trade takes place. Previous work focused on the platform perspective, with the goal of setting prices maximizing the \emph{gain from trade} (the sum of sellers' and buyers' utilities). Gain from trade is, however, potentially unfair to traders, as they may receive highly uneven shares of the total utility. In this work we enforce fairness by rewarding the platform with the \emph{fair gain from trade}, defined as the minimum between sellers' and buyers' utilities.
After showing that any no-regret learning algorithm designed to maximize the sum of the utilities may fail badly with fair gain from trade, we present our main contribution: a complete characterization of the regret regimes for fair gain from trade when, after each interaction, the platform only learns whether each trader accepted the current price. Specifically, we prove the following regret bounds: $\Theta(\ln T)$ in the deterministic setting, $\Omega(T)$ in the stochastic setting, and $\tilde{\Theta}(T^{2/3})$ in the stochastic setting when sellers' and buyers' valuations are independent of each other. We conclude by providing tight regret bounds when, after each interaction, the platform is allowed to observe the true traders' valuations.
\end{abstract}

\section{Introduction}
In the online bilateral trade problem, at each round $t=1,2,\ldots$ a seller and a buyer with private valuations for a certain good connect to a trading platform.
The seller's valuation $S_t$ is the smallest price at which they are willing to sell the good.
Similarly, the buyer's valuation $B_t$ is the highest price they would pay to get the good. The platform posts a price $P_t$ to both buyer and seller.
A trade happens if and only if both agents agree to trade, i.e., $S_t \le P_t \le B_t$.
At the end of the round, $S_t$ and $B_t$ remain unknown and the platform only observes $\I\{S_t \le P_t\}$ (i.e., whether the seller accepted the deal) and $\I\{P_t \le B_t\}$ (i.e., whether the buyer accepted the deal).

Previous works \citep{cesa2021regret,azar2022alpha,cesa2023bilateral,cesa2023repeated,bolic2024online,BernasconiCCF24} focused on minimizing regret with respect to the gain from trade function $\gft \colon (p,s,b) \mapsto (b-s)\I\{s \le p \le b\}$. 
The quantity $\gft(P_t, S_t, B_t)$ corresponds to the sum of the seller's utility $P_t - S_t$ and the buyer's utility $B_t - P_t$ at time $t$ when a trade happens, and zero otherwise.
This reward function, however, is oblivious to asymmetries in the utilities of the buyer and the seller caused by $P_t$ not being close to the mid-price $(S_t+B_t)/2$.
As argued in \cite{ciccone2020fairness}, prices generating unequal gains may lead to a reduced participation in the market, which translates to less trading on the platform.

To address this problem, instead of using the \emph{sum} of the utilities (i.e., the gain from trade) as reward function, we use the \emph{minimum} of the utilities. This reward function ignores any surplus a trader may achieve at the expense of the other, thus encouraging the platform to set prices $P_t$ as close as possible to the mid-price, equalizing the profit of sellers and buyers.
We call this new reward function \emph{fair gain from trade} and denote it by $\fgft$.
Our goal is to design algorithms minimizing the regret over $T$ rounds.
This is the difference between the (expected) total $\fgft$ achieved by the best fixed price $p^\star$ and the (expected) total $\fgft$ achieved by the algorithm.
Note that the two bits received as feedback at the end of each round (i.e., $\I\{S_t \le P_t\}$ and $\I\{P_t \le B_t\}$) are not enough to compute bandit feedback (the reward earned by posting the price $P_t$) neither for $\fgft$ nor for $\gft$.\footnote{In fact, if a trade occurs, we only know that $S_t \leq P_t \leq B_t$, but this information alone allows us to compute neither $\min\{S_t - P_t,B_t - P_t\}$ nor $B_t - S_t$.}

If sellers' and buyers' valuations are independent and drawn i.i.d.\ from two fixed but unknown distributions, we obtain an efficient algorithm (\Cref{a:fair_scouting}), achieving a regret of $\Otilde(T^{2/3})$ after $T$ rounds (\Cref{thm:UB:stochastic-iv}).
This algorithm is built around the key \emph{Convolution Lemma} (\Cref{l:fair_decomposition}), which shows how one can estimate the \emph{expected} $\fgft$ through the feedback the learner has access to.
\Cref{a:fair_scouting}  
does so by building uniform estimates for the expected $\fgft$ via a discrete convolution procedure that combines the feedback collected from sellers and buyers
across different time steps.
We then derive a lower bound matching this rate up to a logarithmic factor (\Cref{thm:LB:stochastic-iv}).
The lower bound construction leverages the relationship between the feedback and the $\fgft$ to build hard instances of our problem.
These hard instances are similar to the ones in the \emph{revealing action} problem of partial monitoring (see, e.g., \cite{cesa2006prediction}), that force the learner to perform a certain amount of costly exploration.
An analogous phenomenon shows up even in the deterministic case, where $S_t=s$ and $B_t=b$ for all $t$ and for some unknown constants $s,b$. 
In this simpler setting, we prove that posting some clearly suboptimal but informative prices is unavoidable, showing that no strategy can obtain a regret better than $\Omega(\ln T)$ (\Cref{thm:LB:deterministic}).
We complement this result by showing that this rate is matched by a double binary search algorithm (\Cref{thm:UB:deterministic}).

We also show that the independence of sellers' and buyers' valuations is necessary to minimize regret with respect to fair gain from trade: if the pairs $(S_t,B_t)$ are drawn i.i.d.\ from an arbitrary \emph{joint} distribution, then the $\fgft$ regret must grow linearly with time (\Cref{thm:LB:stochastic}).

Finally, we complete the picture by quantifying the cost of partial information. 
We do so by analyzing the regret rates at which the platform can learn if the pair $(S_t,B_t)$ is revealed at the end of each round---the so-called \emph{full feedback} model.
Here, we show that a regret of $\scO(\sqrt{T})$ can be achieved for any \emph{joint} distribution of sellers' and buyers' valuations (\Cref{thm:UB:stochastic-full}), and we show that this rate is optimal up to constant factors, even if we assume that the traders' valuations are independent of each other (\Cref{thm:LB:stochastic-full}).

Note that when regret is minimized with respect to the \emph{gain from trade}, the independence of sellers' and buyers' valuations is not enough to guarantee sublinear regret, and an additional \emph{smoothness} assumption is required to compensate for the lack of Lipschitzness of $\gft$. In \citep{cesa2023bilateral,cesa2023repeated}, smoothness is ensured by assuming that the densities of sellers' and buyers' valuations are bounded by some constant. In this case the optimal regret for $\gft$ turns out to be the same as the optimal rate $\Thetatilde(T^{2/3})$ achievable for $\fgft$ without requiring the boundedness condition.
This happens because the fairness condition confers Lipschitzness to the gain from trade, allowing us to compare against a broader range of distributions.
Another interesting discrepancy arises in the deterministic case, where a learner can devise a strategy whose regret is
constant
when the reward function is $\gft$. This is in contrast to the $\Theta(\ln T)$ rate when the reward function is $\fgft$, as we discuss at the beginning of \Cref{s:deterministic}.

Attempts to circumvent the linear lower bound for the regret of gain from trade in adversarial environments include \citep{azar2022alpha}, where they focus on $2$-regret, and
\citep{BernasconiCCF24}, where they consider a global budget balance condition that allows the learner to subsidize trades with money accumulated in previous rounds. 

A related though incomparable setting is online brokerage, studied in \citep{bolic2024online}.

Fairness is an intensively studied topic in online learning, with the goal of understanding the extent to which the fairness constraints impact on the regret.
The work \citep{blum2018preserving} considers online prediction with expert advice and studies the problem of combining individually non-discriminatory experts while preserving non-discrimination.
An early investigation of fairness in linear bandits is conducted in \citep{gillen2018online}, where the fairness constraints demand that similar action be assigned approximately equal probabilities of being pulled, and the similarity metric must be learned via fairness violation feedback.
Fairness in linear bandits is also investigated in \citep{gaucher2022price}, where the reward observed by the learner is biased towards a specific group of actions.
In \citep{bechavod2019equal}, the authors study an online binary classification problem with one-sided feedback where the fairness constraint requires the false positive rate to be equal across two groups of incoming users.
The paper \citep{chzhen2021unified} applies Blackwell's approachability theory to investigate online learning under group fairness constraints.
A different notion of fairness in bandits is considered by \citep{li2019combinatorial,patil2021achieving}, where each arm has to be pulled at least a pre-specified fraction of times (see also \citep{celis2019controlling} for related results).
This type of fairness requirement is also considered in \citep{wang2021fairness} where the optimal policy is pulling an arm with a probability proportional to its merit (a problem-specific function of the arm's expected reward).
Finally, the work \citep{hossain2021fair} considers a $K$-armed bandit setting in which pulling an arm yields different rewards for different agents. The algorithm's goal is to control regret against the optimal Nash Social Welfare (NSW). The NSW of a probability assignment $\pi$ over the arms is the product of the agents' reward in expectation according to $\pi$.
 
Online learning with fairness constraints is also investigated in online fair division \citep{aleksandrov2020online}, where each good in a sequence must be allocated to a set of agents who receive some utility that depends on the good.
The goal is to satisfy a given fairness criterion, which is typically not aligned with the maximization of the agents' utilities.
Traditionally, this problem has been studied under the assumption that at the beginning of each round agents report their true utilities \citep{liao2022nonstationary}.
Very recently, the problem was studied in a bandit setting, where only the stochastic utility of the agent receiving the good is revealed \citep{yamada2024learning}.
In that setting, there is a finite set of types and the expected utility of agent $i$ for type $j$ is fixed but unknown.
The regret is then defined in terms of the geometric mean of the total expected utilities of the agents.
Our regret, instead, is defined additively over the minimum of the agents' utilities in each round, where the utilities of sellers and buyers depend on each other through the price posted by the platform.
Our notion of regret can also be viewed as an online version of the Kalai–Smorodinsky solution to the bargaining problem, in the sense that we also strive to equalize the utilities of the two players \citep{kalai1975other}.
The regret typically studied in online fair division, instead, corresponds to the Nash solution to the bargaining problem.

\textbf{Formal problem definition.} 
We study the following problem. At each time $t\in \N$,
\begin{enumerate}[wide,parsep=1pt]
    \item A seller and a buyer arrive with private valuations $S_{t} \in [0,1]$ and $B_{t} \in [0,1]$
    \item The platform proposes a trading price $P_t \in [0,1]$
    \item If $S_t \le P_t \le B_t$, then the buyer gets the object and pays $P_t$ to the seller
    \item $\I\{S_t \le P_t\}$ and $\I\{P_t \le B_t\}$ are revealed
\end{enumerate}
The boundedness assumption for valuations and prices is standard in regret minimization settings.
We enforce fairness by rewarding the platform with the minimum of the utilities of sellers and buyers.
More precisely, for any $p \in [0,1]$ and any $s,b \in [0,1] $, we define the \emph{fair gain from trade} achieved with $p$ when the seller's valuation is $s$ and the buyer's valuation is $b$ by
\[
	\fgft(p,s,b)
 \coloneqq
    \min \big\{(p-s)_+,(b-p)_+\big\}~,
\]
where $x_+ \coloneqq \max\{x,0\}$ for any $x \in \bbR$.

We assume a stochastic model where the sequence $(S_t,B_t)_{t \in \N}$ of sellers' and buyers' valuations is an i.i.d.\ process with a fixed but unknown distribution. The regret after $T$ rounds of an algorithm posting prices $P_1,P_2,\ldots$ is defined by
\[
    R_T \coloneqq \max_{0 \le p \le 1} \E\left[\sum_{t=1}^T \fgft(p,S_t,B_t) \right] - \E\left[\sum_{t=1}^T \fgft(P_t,S_t,B_t) \right]~.
\]
In the deterministic setting (which can be viewed as a special case of the above stochastic setting) there exist $s,b \in [0,1]$ such that, for every $t \in \N$, it holds that $S_t = s$ and $B_t = b$.
In both cases, note that the maximum exists because the expected $\fgft$ is a $1$-Lipschitz function of the price.

\section{Maximization of gain from trade does not imply fairness}
In this section, we show that, in general, a no-regret algorithm for $\gft$ fails to achieve no-regret guarantees for $\fgft$.
Consider the stochastic setting, where the sequence $(S_t,B_t)_{t \in \N}$ is an i.i.d.\ process. Pick $0<h<1/2$ and, for all $t\in\N$, let $S_t$ be such that $S_t = 0$ with probability $1/2$ and $S_t = 1-h$ with probability $1/2$.
Let also $B_t=1$ for all $t\in\N$.
Then, the only prices maximizing $\gft$ are those in the interval $[1-h,1]$.
Now, any price $p \in [1-h,1]$ achieves an expected $\fgft$ of
\[
    \E\bsb{ \fgft(p, S_t,B_t) }
=
    \frac{1}{2} \cdot (1-p) + \frac{1}{2} \cdot \min \bcb{(1-p),\brb{p-(1-h)}}
\]
and the maximum of this quantity is $h/2$, which is attained, for example, by posting $p=1-h$.
On the other hand, it is easy to see that the maximum of the expected $\fgft$ on the whole interval $[0,1]$ is $1/4 > h/2$, achieved by posting $p = 1/2 \not\in [1-h,1]$.
So, if we use any no-regret algorithm for the standard bilateral trade problem to post prices $P_1,P_2,\dots$, we suffer linear $\fgft$ regret on this instance. Even more strikingly, on these instances, for any $t \in \N$,
\[
    \frac{\E\bsb{\fgft(1-h,S_t,B_t)}}{\E\bsb{\fgft(1/2,S_t,B_t)}}
=
    2h \to 0^+\qquad\text{for $h\to 0^+$}~,
\]
which implies that there are instances where any no-regret algorithm for the standard bilateral trade problem fails even if we content ourselves with competing against a fraction of the reward earned by a no-regret algorithm for the fair bilateral trade problem.\footnote{The reader familiar with the notion of $\alpha$-regret will note that this is equivalent to saying that any no-regret algorithm for the standard bilateral trade problem fails to be a no-$\alpha$-regret algorithm for the \emph{fair} bilateral trade problem, regardless of how large $\alpha$ is chosen.}

\section{The stochastic case}
\label{s:two-bit}
We begin by providing a linear lower bound on the worst-case regret for the stochastic case.
The idea of the proof is to leverage a \emph{lack of observability} phenomenon: we can devise two different distributions whose maximum expected fair gain from trade is achieved in disjoint regions---so that posting a price that is in the good region for one distribution leads to an instantaneous regret bounded away from zero for the other distribution---but the learner cannot distinguish which is the underlying distribution generating the valuations, given that these distributions are designed so that the (push-forward) distribution of the received feedback is exactly the same for both of them.  

\begin{theorem}
    \label{thm:LB:stochastic}
In the stochastic case, for every algorithm for the fair bilateral trade problem, there exists a joint distribution under which, for an i.i.d.\ sequence $(S_t,B_t)_{t \in \N}$ of sellers and buyers, we have, for all $T \in \N$,
\[
R_T \ge \frac{T}{48}~.
\]
\end{theorem}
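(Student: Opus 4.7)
The plan is to use a classical two-instance indistinguishability argument, of the revealing-action / unobservable type that the paper's introductory paragraph to this section hints at. I would first construct two i.i.d.\ joint distributions on $(S,B) \in [0,1]^2$ that induce identical feedback distributions at every price $p \in [0,1]$, yet have expected-$\fgft$ maximizers in disjoint regions of $[0,1]$.

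Concretely, I would try
\[
    \mu_1 \coloneqq \tfrac{1}{2} \delta_{(0,\,1/2)} + \tfrac{1}{2} \delta_{(1/2,\,1)}\,, \qquad
    \mu_2 \coloneqq \tfrac{1}{2} \delta_{(0,\,1)} + \tfrac{1}{2} \delta_{(1/2,\,1/2)}\,.
\]
I would then verify indistinguishability by splitting into the cases $p<1/2$, $p=1/2$, $p>1/2$ and checking that the joint law of $\brb{\I\{S\le p\},\,\I\{p\le B\}}$ coincides under $\mu_1$ and $\mu_2$ in each case. Next I would compute the expected fair gain from trade $g_i(p) \coloneqq \E_{\mu_i}[\fgft(p,S,B)]$: under $\mu_1$, $g_1$ is piecewise linear with two peaks at $p = 1/4$ and $p=3/4$ of height $1/8$; under $\mu_2$, $g_2(p) = \min\{p,1-p\}/2$ with peak $1/4$ at $p=1/2$. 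A short per-interval computation shows that $g_1(p)+g_2(p) \le 1/4$ for all $p$, so the pointwise sum of instantaneous regrets satisfies
\[
    r_1(p) + r_2(p) \ge \tfrac{1}{8} + \tfrac{1}{4} - \tfrac{1}{4} = \tfrac{1}{8} \qquad \forall p \in [0,1]\,.
\]

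The reduction step is standard: for any (possibly randomized) algorithm, the posted prices $P_t$ are measurable functions of the past feedback and an external random tape, so by induction on $t$ the joint distribution of the trajectory $(P_1,F_1,\dots,P_T,F_T)$ is identical under the two environments. Taking expectations under this common distribution,
\[
    R_T^{(1)} + R_T^{(2)} \;=\; \sum_{t=1}^T \E\bsb{r_1(P_t) + r_2(P_t)} \;\ge\; \tfrac{T}{8}\,,
\]
so at least one of the two instances forces regret at least $T/16 \ge T/48$, yielding the theorem.

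The step I expect to be the main obstacle is the indistinguishability check at the boundary $p=1/2$, where the atoms of both distributions touch the threshold and the two bits of feedback must still coincide in law; once the construction is pinned down so that the probability of the event $\{S\le p,\,p\le B\}$ and each of its complementary cells match under $\mu_1$ and $\mu_2$ for every $p$, the rest is an elementary comparison of piecewise-linear functions and a straightforward two-line reduction argument.
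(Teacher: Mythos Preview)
Your proposal is correct and follows essentially the same indistinguishability strategy as the paper: two joint distributions that induce identical feedback laws at every price, yet have $\fgft$ maximizers in disjoint regions. Your construction is in fact simpler than the paper's---two atoms instead of three---and yields a better constant ($T/16$ rather than $T/48$); the indistinguishability check at $p=1/2$ indeed goes through since both atoms of each distribution produce feedback $(1,1)$ there.
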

\begin{proof}
    For any point  $x\in[0,1]^2$, let $\delta_x$ be the Dirac measure centered at $x$. Consider the two distributions
    $
        \mu \coloneqq \frac{1}{3} \lrb{\delta_{\lrb{0,\frac{5}{8}}} + \delta_{\lrb{\frac{3}{8},\frac{3}{8}}} + \delta_{\lrb{\frac{5}{8},1}}}
    $
    and 
    $
        \nu \coloneqq \frac{1}{3} \lrb{\delta_{\lrb{0,\frac{3}{8}}} + \delta_{\lrb{\frac{3}{8},1}} + \delta_{\lrb{\frac{5}{8},\frac{5}{8}}}}
    $.
    Suppose that the sequence of valuations $(S_t,B_t)_{t \in \N}$ is drawn i.i.d.\ from $\mu$ or $\nu$.
    If the underlying distribution is $\mu$, the optimal point is $5/16$ and, for all $p\in[1/2,1]$ and $t \in \N$, the difference $\E\bsb{\fgft(5/16,S_t,B_t)} - \E\bsb{ \fgft(p,S_t,B_t)}$ is at least
    $\frac{1}{3} \left( \frac{5}{16} - \frac{3}{16} \right) = \frac{1}{24}$.
    Analogously, if the underlying distribution is $\nu$, the optimal point is $11/16$ 
    and, for all $p\in[0,1/2]$ and $t \in \N$, $\E\bsb{\fgft(11/16,S_t,B_t)} - \E\bsb{ \fgft(p,S_t,B_t)}\ge\frac{1}{24}$.
    This means that the only way for the learner not to suffer $\Omega(T)$ regret is to distinguish whether the underlying distribution is $\mu$ or $\nu$.
    But a direct verification shows that the distribution of the feedback is the same for every $p \in [0,1]$, regardless of whether the underlying distribution is $\mu$ or $\nu$.
    Hence, the learner has no means to distinguish between $\mu$ and $\nu$ and must suffer $\Omega(T)$ regret.
    Indeed, let $N_T$ be the random number of times the platform posts a price $P_t$ in $[0,\frac{1}{2}]$. Then $N_T$ has the same distribution under $\mu$ and $\nu$. If $\E_{\mu}[ N_T ] = \E_{\nu}[ N_T ] \ge \frac{T}{2}$, then the expected regret under $\nu$ is at least $\frac{T}{2} \cdot \frac{1}{24} = \frac{T}{48}$. Conversely, If $\E_{\mu}[ N_T ] = \E_{\nu}[ N_T ] \le \frac{T}{2}$, then the expected regret under $\mu$ is at least $\frac{T}{48}$.
\end{proof}
\begin{remark}
    \label{r:adversarial}
    Note that \Cref{thm:LB:stochastic} together with Yao's Minimax Theorem immediately imply that the \emph{adversarial} fair bilateral trade problem---where the goal is to obtain sublinear worst-case expected regret against the best fixed-price when the sequence of seller/buyer valuations $(S_t,B_t)_{t \in \N}$
    is chosen by an oblivious adversary---is unlearnable.    
\end{remark}

We now show that we can achieve learnability in the stochastic case by assuming that, for each $t \in \N$, the two valuations $S_t$ and $B_t$ are \emph{independent} of each other.

To this end, we first present the Convolution Lemma (\cref{l:fair_decomposition}), which provides a way to avoid the aforementioned lack of observability when the traders' valuations are independent of each other. This lemma plays for \fgft\ a role analogous to the one played by the Decomposition Lemma \cite[Lemma 1]{cesa2023bilateral} for \gft.
\begin{lemma}[The Convolution Lemma]
    \label{l:fair_decomposition}
    For all $s,b,p\in[0,1]$,
    \begin{equation}
    \label{eq:convolution-1}    
        \fgft(p,s,b)
    =
        \int_0^1  \I\{s \le p-u\} \I\{p+u \le b\} \dif u \;.
    \end{equation}
    In particular, if $S$ and $B$ are $[0,1]$-valued independent random variables, for each $p \in [0,1]$,
    \begin{equation}
    \label{eq:that-integral}
        \E\bsb{\fgft(p,S,B)}
    =
        \int_0^1 \Pb[S \le p-u ] \Pb[p+u \le B] \dif u \;.
    \end{equation}
\end{lemma}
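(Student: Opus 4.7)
The identity is essentially a direct computation, so my plan is to verify the pointwise identity \eqref{eq:convolution-1} first and then derive \eqref{eq:that-integral} by taking expectations.

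\textbf{Step 1 (pointwise identity).} I would observe that the product of indicators $\I\{s \le p-u\}\I\{p+u \le b\}$ equals $1$ if and only if $u \le p-s$ and $u \le b-p$ simultaneously, i.e., if and only if $u \le \min\{p-s,\,b-p\}$. Combined with the constraint $u \in [0,1]$ from the integration domain, the integral on the right-hand side of \eqref{eq:convolution-1} equals the Lebesgue measure of the set
\[
    \bcb{u \in [0,1] \,:\, 0 \le u \le \min\{p-s,\,b-p\}}\,.
\]
I would then split into two cases. If $\min\{p-s,\,b-p\} < 0$, this set is empty, so the integral is $0$; and on the other hand $\min\{(p-s)_+,(b-p)_+\} = 0$ because at least one of $p-s$ or $b-p$ is negative. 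If $\min\{p-s,\,b-p\} \ge 0$, the set is $\bsb{0,\min\{p-s,\,b-p\}}$, since $p-s \le 1$ and $b-p \le 1$ (as $s,b,p \in [0,1]$), so the upper cap at $1$ is inactive; thus the integral equals $\min\{p-s,\,b-p\} = \min\{(p-s)_+,(b-p)_+\}$. In both cases the integral equals $\fgft(p,s,b)$.

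\textbf{Step 2 (expectation and independence).} Applying the identity pointwise to $(S,B)$ and taking expectations gives
\[
    \E\bsb{\fgft(p,S,B)} = \E\lsb{\int_0^1 \I\{S \le p-u\}\I\{p+u \le B\}\dif u}\,.
\]
Since the integrand is nonnegative, Tonelli's theorem allows the swap of $\E$ and $\int_0^1$. Then, for each fixed $u \in [0,1]$, the random variables $\I\{S \le p-u\}$ and $\I\{p+u \le B\}$ are functions of $S$ and $B$ respectively, hence independent by hypothesis, so their expectation factorizes into $\Pb[S \le p-u]\,\Pb[p+u \le B]$, yielding \eqref{eq:that-integral}.

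\textbf{Main obstacle.} There is essentially no obstacle here: the lemma is a clean rewriting of $\min\{(p-s)_+,(b-p)_+\}$ as the length of an interval. The only mild point requiring care is the sign-of-$p-s$ and $p-b$ case split in Step~1, and the observation that the upper truncation at $u=1$ is never binding because prices and valuations lie in $[0,1]$. Once the pointwise form is established, Step~2 is a routine application of Tonelli plus the independence assumption.
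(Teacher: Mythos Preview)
Your proposal is correct and follows essentially the same approach as the paper: both arguments reduce the integral to the length of the interval $\{u\in[0,1]:u\le p-s,\ u\le b-p\}$ and then identify this with $\min\{(p-s)_+,(b-p)_+\}$, with the stochastic identity following from Fubini/Tonelli and independence. The only cosmetic difference is that the paper splits cases on which of $(p-s)_+$, $(b-p)_+$ realizes the minimum, whereas you split on the sign of $\min\{p-s,b-p\}$; both are equally valid and equally short.
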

\begin{proof}
    If $\fgft(p,s,b) = (b-p)_+$, note that
    \begin{align*}
        (b-p)_+
    &
    =
        \int_0^1 \I\{u \le (b-p)_+ \} \dif u
    =
        \int_0^1 \I\{u \le (p-s)_+\}\I\{u \le (b-p)_+ \} \dif u
    \\
    &
    =
        \int_0^1 \I\{u \le p-s\}\I\{u \le b-p \} \dif u
    =
        \int_0^1 \I\{s \le p-u\}\I\{p+u \le b \} \dif u\;.
    \end{align*}
    The same conclusion holds with an analogous argument if $\fgft(p,s,b) = (p-s)_+$.
    The stochastic case follows immediately from Fubini's theorem and the independence of $S$ and $B$.
\end{proof}
We now explain how the previous lemma can be used to recover the observability of the expected \fgft\ under the independence assumption.
Note that, given the feedback we have access to, we can estimate the cumulative distributions of both sellers' and buyers' valuations pointwise with arbitrary precision, and hence, \emph{a fortiori}, we can estimate \eqref{eq:that-integral}.
However, even if this observation alone would be enough to ensure learnability, this is not the most efficient way of learning the expected fair gain from trade function.
In fact, we do not really need a careful pointwise estimation of both cumulative distributions, but just of~\eqref{eq:that-integral}, which involves certain products of their translates.
The crucial observation---from which \cref{l:fair_decomposition} takes its name---is that~\eqref{eq:that-integral} is the (incomplete) \emph{convolution} of the cumulative distribution of the sellers' valuations and the co-cumulative distribution of the buyers' valuations, and that we have noisy access to these functions at the points we need to estimate them, though at different time steps.
This observation suggests that we can approximate the continuous (incomplete) convolution by a discrete (incomplete) convolution involving the noisy observations we collect at different time steps, e.g., by posting prices on a uniform grid.
These ideas are exploited in the design of \cref{a:fair_scouting} and in the proof of its regret guarantees. 

\begin{algorithm}
\textbf{Input}: $K \in \N$\;
\textbf{Initialization}: for each $t \in \mathbb{Z}$, set $V_t \coloneqq W_t \coloneqq 0 $\;
\For
{%
    time $t=1,2,\dots, K$
}
{
    Post price $P_t \coloneqq \frac{t}{K}$ and set $V_t \coloneqq \I\{S_t \le P_t\}, W_t \coloneqq \I\{P_t \le B_t\}$\; 
}
Let $I \in \argmax_{i \in [K]} \frac{1}{K}\sum_{k=0}^{K-1} V_{i-k}W_{i+k}$\;
\For
{%
    $t = K+1,K+2,\dots, T$
}   
{
    Post price $P_t \coloneqq I/K$\;
}
\caption{Convolution Pricing (Stochastic Setting)}
\label{a:fair_scouting}
\end{algorithm}

\begin{theorem}
    \label{thm:UB:stochastic-iv}
    In the stochastic case, under the additional assumption that for each $t \in \N$ the seller's valuation $S_t$ is independent of the buyer's valuation $B_t$, by setting $K \coloneqq \lfloor T^{2/3}\rfloor$, the regret suffered by \cref{a:fair_scouting} is $\tilde{O}(T^{2/3})$.
\end{theorem}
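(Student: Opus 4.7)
The plan is to analyze \cref{a:fair_scouting} via a standard explore-then-commit decomposition into three parts: an \emph{exploration regret} of at most $K$ (one unit per round, since $\fgft \in [0,1]$); a \emph{discretization regret} of at most $(T-K)/K$, which follows because $p \mapsto \E[\fgft(p,S,B)]$ is $1$-Lipschitz (being the minimum of two $1$-Lipschitz functions in $p$), so the best grid price $i^\star/K$ loses at most $1/K$ in expected reward compared to the true optimum $p^\star$; and an \emph{estimation regret}, coming from the fact that the grid index $I$ chosen by the algorithm may not maximize the true expected reward due to noise in the statistics $\hat{M}_i \coloneqq \frac{1}{K}\sum_{k=0}^{K-1} V_{i-k}W_{i+k}$. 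The bulk of the work lies in the third term, which reduces to showing that each $\hat{M}_i$ is close to $\E[\fgft(i/K,S,B)]$ uniformly over $i \in [K]$.

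To bound the bias, I would invoke the Convolution Lemma (\cref{l:fair_decomposition}) to write $\E[\fgft(i/K,S,B)] = \int_0^1 F_S(i/K - u)\,\bar{F}_B(i/K + u)\,\dif u$, where $F_S$ is the CDF of $S$ and $\bar F_B(x) \coloneqq \Pb[B \ge x]$ (both extended by $0$ outside $[0,1]$). This integral is approximated by the left Riemann sum $\frac{1}{K}\sum_{k=0}^{K-1} F_S((i-k)/K)\,\bar F_B((i+k)/K)$, which coincides with $\E[\hat{M}_i]$ up to a boundary correction of size $O(1/K)$ coming from the convention $V_t = W_t = 0$ for $t \notin [1,K]$. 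Since the integrand $u \mapsto F_S(i/K-u)\,\bar F_B(i/K+u)$ is monotone in $u$ and bounded in $[0,1]$, its Riemann-sum error telescopes to at most $1/K$, yielding the bias bound $\babs{\E[\hat{M}_i] - \E[\fgft(i/K,S,B)]} = O(1/K)$.

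For the concentration of $\hat{M}_i$ around its mean, the key observation is that, for fixed $i$, the $K$ products $V_{i-k}W_{i+k}$ are mutually independent across $k$. Indeed, for distinct $k_1,k_2 \in \{0,\ldots,K-1\}$, the time-index sets $\{i-k_1, i+k_1\}$ and $\{i-k_2, i+k_2\}$ are disjoint—the only coincidences would require $k_1 = k_2$ or $k_1 + k_2 = 0$, both excluded—so $(S_{i-k_1},B_{i+k_1})$ and $(S_{i-k_2},B_{i+k_2})$ are independent by the i.i.d.\ assumption; moreover each individual product $V_{i-k}W_{i+k}$ factors since $S_{i-k}$ and $B_{i+k}$ are independent (trivially for $k > 0$ and by the theorem's hypothesis for $k=0$). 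Hoeffding's inequality then gives $\babs{\hat{M}_i - \E[\hat{M}_i]} = O\brb{\sqrt{\log(1/\delta)/K}}$ with probability $1-\delta$, and a union bound over the $K$ grid points with $\delta = 1/(KT)$ yields $\max_i \babs{\hat{M}_i - \E[\hat{M}_i]} = O\brb{\sqrt{\log(KT)/K}}$ with probability at least $1 - 1/T$.

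Putting the pieces together, on this high-probability event $\E[\fgft(I/K,S,B)] \ge \max_i \E[\fgft(i/K,S,B)] - O\brb{1/K + \sqrt{\log(KT)/K}}$, so by the Lipschitz bound the per-round expected regret during the exploitation phase is $O\brb{1/K + \sqrt{\log(KT)/K}}$. Summing the three contributions gives a total regret of $K + T \cdot O\brb{1/K + \sqrt{\log(KT)/K}} + O(1)$, which for $K = \lfloor T^{2/3}\rfloor$ evaluates to $\tilde O(T^{2/3})$. The main technical obstacle is the mutual-independence argument for the $K$ summands of $\hat{M}_i$: this is what allows Hoeffding to give $1/\sqrt{K}$ concentration and ultimately the $T^{2/3}$ rate. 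Without it, one would have to resort to a blocking or pairing scheme that would lose a constant (or worse, polylogarithmic) factor in the effective sample size and risk spoiling the optimal exponent.
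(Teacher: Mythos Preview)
Your proposal is correct and follows essentially the same route as the paper's proof: the same explore--then--commit split, the same $1$-Lipschitz discretization bound, the same use of the Convolution Lemma plus monotonicity of $u\mapsto F_S(p-u)\bar F_B(p+u)$ to control the Riemann-sum bias by $O(1/K)$, and the same Hoeffding-plus-union-bound step on the $K$ products $V_{i-k}W_{i+k}$. If anything, you are more explicit than the paper on two points: you spell out why the summands $(V_{i-k}W_{i+k})_{k}$ are mutually independent (the paper simply asserts this), and you flag the $O(1/K)$ boundary correction arising from the convention $V_t=W_t=0$ for $t\notin[K]$, which the paper glosses over.
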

\begin{proof}
For each $k \in [K]$, define $q_k \coloneqq \frac{k}{K}$.
Note that for each $t \in \N$ the function
$
    p \mapsto \E\bsb{\fgft(p,S_t,B_t)}
$
is $1$-Lipschitz being the expectation of (random) $1$-Lipschitz functions.
Hence, if for each $p \in [0,1]$ we denote by $k^\star(p)$ the index of the closest point to $p$ in the grid $\{q_1,\dots,q_K\}$, we have, for each $t \in \N$, that, 
\begin{equation}
\label{eq:approximation}
    \E\bsb{\fgft(p,S_t,B_t)} - \E\Bsb{\fgft\brb{q_{k^\star(p)},S_t,B_t}} \le \frac{1}{K}\;.
\end{equation}
Let $F$ be the common cumulative function of the random variables in the process $(S_t)_{t \in \N}$ and let $G$ be the common co-cumulative function of the random variables in the process $(B_t)_{t \in \N}$, i.e., for each $t \in \N$ and each $p \in \bbR$, define
$
    F(p) \coloneqq \Pb[S \le p]$ and $G(p) \coloneqq \Pb[p \le B]
$.
Note that $F(u) = 0$ for each $u \le 0$, that $G(u) = 0$ for each $u \ge 1$, and that for each $p \in [0,1]$, the function
$
    u \mapsto F(p-u)G(p+u) 
$
is non-increasing, being the product of two non-increasing functions.
Hence, for each $t \in \N$ and each $k \in [K]$, by \cref{l:fair_decomposition}, we can sandwich the quantity $\frac{1}{K}\sum_{i=0}^{K-1} F\lrb{q_{k-i}}G\lrb{q_{k+i}}$ as follows:
\begin{align}
\label{eq:sandwich}
&
    \E\bsb{\fgft(q_k,S_t,B_t)}
=
    \int_0^1 F(q_k-u)G(q_k+u)\dif u
=
    \sum_{i=0}^{K-1} \int_{\frac{i}{K}}^{\frac{i+1}{K}} F(q_k-u)G(q_k+u)\dif u
    \nonumber
\\
&\qquad\le
    \sum_{i=0}^{K-1} \int_{\frac{i}{K}}^{\frac{i+1}{K}} F\lrb{\frac{k}{K}-\frac{i}{K}}G\lrb{\frac{k}{K}+\frac{i}{K}}\dif u
=
    \frac{1}{K}\sum_{i=0}^{K-1} F\lrb{q_{k-i}}G\lrb{q_{k+i}}
    \nonumber
\\
&\qquad\le
    \frac{1}{K}+\frac{1}{K}\sum_{i=0}^{K-1} F\lrb{q_{k-(i+1)}}G\lrb{q_{k+(i+1)}}
=
    \frac{1}{K} + \sum_{i=0}^{K-1} \int_{\frac{i}{K}}^{\frac{i+1}{K}} F\lrb{\frac{k}{K}-\frac{i+1}{K}}G\lrb{\frac{k}{K}+\frac{i+1}{K}}\dif u
    \nonumber
\\
&\qquad\le
    \frac{1}{K} + \sum_{i=0}^{K-1} \int_{\frac{i}{K}}^{\frac{i+1}{K}} F(q_k-u)G(q_k+u)\dif u
=
    \frac{1}{K} + \int_0^1 F(q_k-u)G(q_k+u)\dif u
    \nonumber
\\
&\qquad=
    \frac{1}{K} + \E\bsb{\fgft(q_k,S_t,B_t)}\;.
\end{align}
Now, by the independence assumption,
\[
    \frac{1}{K}\sum_{i=0}^{K-1} F\lrb{q_{k-i}}G\lrb{q_{k+i}}
=
    \frac{1}{K}\sum_{i=0}^{K-1} \E[V_{k-i}] \E[W_{k+i}]
=
    \E\lsb{\frac{1}{K}\sum_{i=0}^{K-1}V_{k-i}W_{k+i}}
\]
and, by noting that for each $k \in [K]$ we have that $\frac{1}{K}\sum_{i=0}^{K-1}V_{k-i}W_{k+i}$ is the empirical mean of $K$ $\{0,1\}$-valued independent random variables, by Hoeffding's inequality and a union bound, we have that, for any $\e > 0$,
\begin{equation*}
    \Pb \lsb{ \max_{k \in [K]} \labs{ \frac{1}{K}\sum_{i=0}^{K-1}V_{k-i}W_{k+i} - \frac{1}{K}\sum_{i=0}^{K-1} F\lrb{q_{k-i}}G\lrb{q_{k+i}} } \ge \e} \le 2K \exp(-2 \e^2 K)\;.    
\end{equation*}
In particular, if we set $\e_T \coloneqq \sqrt{ \frac{\ln(2T)}{2\lfloor T^{2/3}\rfloor}}$, recalling that $K = \lfloor T^{2/3}\rfloor$, and defining the (good) event $\cE_T \coloneqq \lcb{\max_{k \in [K]} \labs{ \frac{1}{K}\sum_{i=0}^{K-1}V_{k-i}W_{k+i} - \frac{1}{K}\sum_{i=0}^{K-1} F\lrb{q_{k-i}}G\lrb{q_{k+i}} } < \e_T}$, we have
\begin{equation}
    \label{eq:bad-event}
    \Pb [ \cE^c_T] \le \frac{1}{T^{1/3}}\;.    
\end{equation}
Now, let $p^\star \in \argmax_{p \in [0,1]} \E\bsb{\fgft(p,S_t,B_t)}$ (whose definition is independent of $t$, given that the process $(S_t,B_t)_{t \in \N}$ is i.i.d.) and, recalling the definition of $I$ from \cref{a:fair_scouting}, note, for each $t > K$, that
\begin{multline*}
    \E\bsb{\fgft(P_t,S_t,B_t)}
=
    \E\bsb{\fgft(q_I,S_t,B_t)}
=
    \E\Bsb{\E\bsb{\fgft(q_I,S_t,B_t) \mid I}}
\\
    \begin{aligned}
&=
    \E\bbsb{\Bsb{\E\bsb{\fgft(q_k,S_t,B_t)}}_{k=I}}
\ge
    \E\lsb{\lsb{\frac{1}{K}\sum_{i=0}^{K-1} F\lrb{q_{k-i}}G\lrb{q_{k+i}} - \frac{1}{K}}_{k=I}}
\\
&=
    \E\lsb{\frac{1}{K}\sum_{i=0}^{K-1} F\lrb{q_{I-i}}G\lrb{q_{I+i}} } - \frac{1}{K}
\ge
    \E\lsb{\I_{\cE_T}\cdot \frac{1}{K}\sum_{i=0}^{K-1}V_{I-i}W_{I+i}} - \e_T - \frac{1}{K}
\\
&\ge
    \E\lsb{\I_{\cE_T}\cdot \frac{1}{K}\sum_{i=0}^{K-1}V_{k^\star(p^\star)-i}W_{k^\star(p^\star)+i}} - \e_T - \frac{1}{K}
\ge
    \E\lsb{\frac{1}{K}\sum_{i=0}^{K-1}V_{k^\star(p^\star)-i}W_{k^\star(p^\star)+i}} - \Pb[\cE^c_T] - \e_T - \frac{1}{K} 
\\
&=
    \frac{1}{K}\sum_{i=0}^{K-1} F\lrb{q_{k^\star(p^\star)-i}}G\lrb{q_{k^\star(p^\star)+i}}  - \Pb[\cE^c_T] - \e_T - \frac{1}{K}
\ge
    \E\bsb{\fgft(p^\star,S_t,B_t)} - \Pb[\cE^c_T] - \e_T - \frac{2}{K}\;,
    \end{aligned}
\end{multline*}
where the third equality follows from the Freezing Lemma (see, e.g., \cite[Lemma~8]{cesari2021nearest}), the first inequality from the sandwich inequalities in \eqref{eq:sandwich}, the second inequality from the definition of $\cE_T$, the third inequality from the definition of $I$, and the last inequality from the sandwich inequalities in \eqref{eq:sandwich} and inequality \eqref{eq:approximation}. Putting everything together, we can upper bound the regret as follows
\begin{align*}
    R_T
&\le
    K + \sum_{t=K+1}^T \E\Bsb{\fgft\brb{p^\star,S_t,B_t} - \fgft\brb{P_t,S_t,B_t}} 
\le
    K + \sum_{t=K+1}^T \lrb{  \Pb[\cE^c_T] + \e_T + \frac{2}{K}}
\;.
\end{align*}
Recalling that $K = \lfloor T^{2/3}\rfloor$, $\e_T = \sqrt{ \frac{\ln(2T)}{2\lfloor T^{2/3}\rfloor}}$, and \eqref{eq:bad-event}, we obtain the conclusion.
\end{proof}
We now prove that the strategy employed by \cref{a:fair_scouting} is worst-case optimal, up to logarithmic factors.
At a high level, the reason why the $T^{2/3}$ rate is optimal is that the fair bilateral trade problem contains instances that closely resemble the revealing action problem in partial monitoring \cite{cesa2006prediction}, where, in order to distinguish which one of two actions is optimal, we have to play for a significant amount of time a third highly suboptimal action to gather this information.
We formalize this intuition in the following theorem, whose full proof is deferred to \cref{s:appe-proof-thm:LB:stochastic-iv} due to space constraints.
\begin{theorem}
    \label{thm:LB:stochastic-iv}
    There exists a constant $c>0$ such that the following holds.
    For every algorithm for the fair bilateral trade problem and for every time horizon $T \in \N$
    , there exists an i.i.d.\ sequence $(S_t,B_t)_{t\in [T]}$ of pairs of sellers' and buyers' valuations such that, for each $t \in [T]$, $S_t$ is independent of $B_t$, and the algorithm suffers regret of at least $c T^{2/3}$ on that sequence.
\end{theorem}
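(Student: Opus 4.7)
The plan is to adapt a standard ``revealing-action'' lower bound from partial monitoring to the fair gain from trade, using the Convolution Lemma (\Cref{l:fair_decomposition}) as the key bridge from marginal CDFs to expected reward.

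Step~1, construction of hard instances. For a small universal constant $c_0>0$ and $\varepsilon=c_0 T^{-1/3}$, I would build a pair of product distributions $\mu^+_\varepsilon,\mu^-_\varepsilon$ of $(S,B)$ on $[0,1]^2$ with the following properties: (i) a ``revealing region'' $R\subset[0,1]$ outside of which the marginal CDFs agree, i.e.\ $F^+\equiv F^-$ and $G^+\equiv G^-$ on $[0,1]\setminus R$; (ii) on $R$ the CDFs differ by at most $\varepsilon$, with $F^0,G^0$ bounded away from $0$ and $1$ there; (iii) the maximizers of $p\mapsto\E^\pm[\fgft(p,S,B)]$ lie in disjoint sets $I^+,I^-$ such that posting $p\in I^{\mp}$ costs $\Omega(\varepsilon)$ in instantaneous expected reward under $\mu^\pm$; (iv) every $p\in R$ is $\Omega(1)$-suboptimal under both $\mu^\pm$; (v) $I^+\cup I^-\cup R=[0,1]$. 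A candidate construction starts from piecewise-linear symmetric base CDFs $F^0,G^0$ with base optimum at $1/2$, and perturbs $F$ and $G$ by $\pm\varepsilon$ on disjoint subintervals of $R$ placed symmetrically around $1/2$; the Convolution Lemma then lets one linearize $\int F(p-u)G(p+u)\dif u$ around $p=1/2$ and check that the argmax shifts by a macroscopic amount at reward cost $\Omega(\varepsilon)$.

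Step~2, information-theoretic lower bound. Let $N_T$ be the number of rounds with $P_t\in R$ and let $B_T^\pm=|\{t\le T:P_t\notin I^\pm\}|$; by (iii)--(iv), the regret under $\mu^\pm$ is at least $c_3\varepsilon(B_T^\pm-N_T)+c_4 N_T$. The feedback at each round is a product of Bernoullis with parameters $F^\pm(P_t),G^\pm(P_t)$ that agree on $[0,1]\setminus R$ and differ by at most $\varepsilon$ on $R$, so the chain rule for KL divergence plus the elementary bound $\mathrm{KL}(\mathrm{Ber}(q)\|\mathrm{Ber}(q'))\le(q-q')^2/(q'(1-q'))$ give $\mathrm{KL}(\Pb^+\|\Pb^-)\le C\varepsilon^2\E^+[N_T]$, and Pinsker's inequality yields $\|\Pb^+-\Pb^-\|_{\mathrm{TV}}\le\varepsilon\sqrt{C\E^+[N_T]/2}$. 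Case split: if $\E^+[N_T]\ge T^{2/3}$, the regret under $\mu^+$ is already $\Omega(T^{2/3})$ via the $N_T$ term; otherwise the TV distance is at most $O(c_0)<1/2$ for $c_0$ small enough. Consider the event $A=\{B_T^+\ge T/2\}$: if $\Pb^+[A]\ge 1/2$, then $\E^+[B_T^+]\ge T/4$ and the regret under $\mu^+$ is $\Omega(\varepsilon T)=\Omega(T^{2/3})$. If $\Pb^+[A]<1/2$, then $\Pb^-[A^c]>1/4$, and on $A^c$ at least $T/2$ rounds have $P_t\in I^+$; using (v) and $I^+\cap I^-=\emptyset$, this forces $B_T^-\ge T/2$, so $\E^-[B_T^-]\ge T/8$ and the regret under $\mu^-$ is likewise $\Omega(T^{2/3})$.

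The main obstacle is Step~1, the explicit construction of $\mu^\pm_\varepsilon$ satisfying (i)--(v). The difficulty is reconciling (i)--(ii)---perturbing $F,G$ only on $R$, and only by $\varepsilon$---with (iii)--(v), which require the induced perturbation of $p\mapsto\int F(p-u)G(p+u)\dif u$ to move its argmax out of $R$ by a distance whose expected-reward impact is $\Omega(\varepsilon)$. The Convolution Lemma is precisely the right tool: it recasts the shift as a first-order integral linearization of the convolution in the perturbation, reducing the design problem to a careful choice of $F^0,G^0$ and of the placement of the perturbation's support inside $R$.
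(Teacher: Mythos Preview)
Your high-level strategy (a revealing-action partial-monitoring lower bound) matches the paper's, and your Step~2 is essentially correct. The divergence is in Step~1. The paper's construction is much simpler than your candidate: it perturbs only the seller's marginal, taking $S\sim\frac{1\pm\varepsilon}{2}\delta_0+\frac{1\mp\varepsilon}{2}\delta_{1/4}$ and $B\equiv 1$ deterministically. The revealing region is $R=[0,1/4)$, where the feedback is a $\mathrm{Ber}\bigl(\frac{1\pm\varepsilon}{2}\bigr)$, and on $[1/4,1]$ the feedback is deterministically $(1,1)$, so no Convolution-Lemma linearization is needed---one computes $p\mapsto\E[\fgft(p,S,B)]$ piecewise-linearly by hand. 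Crucially, the unperturbed ($\varepsilon=0$) expected \fgft\ has a \emph{plateau} on $[1/2,5/8]$, which the $\pm\varepsilon$ tilt breaks toward $1/2$ or $5/8$; this is what delivers macroscopically separated $I^+=[1/4,9/16)$, $I^-=(9/16,1]$ with an $\Omega(\varepsilon)$ gap while keeping every point of $R$ at $\Omega(1)$ cost.

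Your candidate---piecewise-linear symmetric $F^0,G^0$ with a \emph{strict} base optimum at $1/2$, perturbing both marginals on subintervals of $R$---does not obviously produce such a plateau. Without one, an $O(\varepsilon)$ perturbation of the convolution generically moves the argmax by only $O(\varepsilon)$, leaving a band of width $\Theta(1)$ near $1/2$ that is neither $\Omega(1)$-suboptimal (so cannot go into $R$) nor cleanly split into $I^+,I^-$; properties (iii)--(v) then fail to hold simultaneously. This is not fatal to your framework---your Step~2 goes through once (i)--(v) are in hand---but you should either explicitly engineer a plateau in the base reward, or simply adopt the paper's two-atom seller / deterministic buyer instance. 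Finally, the paper closes by mapping to a known partial-monitoring game and citing an existing $\Omega(T^{2/3})$ bound rather than running your direct KL/Pinsker case split; both routes are valid, yours being self-contained and the paper's being shorter.
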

\begin{proof}[Proof sketch]
    For each $\e \in [-1,1]$, consider the distribution
    $
        \mu_{\e} \coloneqq \frac{1+\e}{2} \delta_{0} + \frac{1-\e}{2} \delta_{1/4}
    $
    where $\delta_a$ is the Dirac measure centered at a point $a\in\bbR$.
    For each $\e \in [-1,1]$, let $(S_t^\e)_{t \in \N}$ be an i.i.d.\ sequence whose distribution is $\mu_{\e}$.
    For each $t \in \N$, let $B_t = 1$.
    Then, for each $\e \in [-1,1]$, the sequence $(S_t^\e,B_t)_{t \in \N}$ is i.i.d.\ and for each $t \in \N$ the valuation $S_t^\e$ is (obviously) independent of the (deterministic) valuation $B_t$.
    Now, one can show that if $\e > 0$ the optimal price to post in order to maximize the expected fair gain from trade over the sequence $(S_t^\e,B_t)_{t \in \N}$ is $1/2$ while if $\e < 0$ the optimal price is $5/8$.
    If $\e > 0$ and we post prices greater or equal than $9/16$ we suffer instantaneous regret of at least order of $|\e|$, while if $\e<0$ and we post prices less or equal than $9/16$ we suffer instantaneous regret of at least order of $|\e|$.
    Hence, to avoid suffering $\Omega\brb{|\e| T}$ regret we need to distinguish the sign of $\e$.
    Given that what we see are the two bits in the feedback, the only way to discriminate the sign of $\e$ is to post prices in the region between $0$ and $1/4$.
    In this case, the feedback we see is equivalent to seeing a Bernoulli of parameter $\frac{1+\e}{2}$, and hence, due to information-theoretic arguments, we need $\Omega(1/\e^2)$ to distinguish the sign of $\e$.
    Now, every price in the region $[0,1/4]$ suffers instantaneous regret of order $\Omega(1)$.
    Hence, any algorithm has to suffer $\Omega\lrb{\frac{1}{\e^2} + |\e| T}$ regret, which is $\Omega(T^{2/3})$ when $|\e| = \Theta (T^{-1/3})$.
\end{proof}
\begin{remark}
    The $T^{2/3}$ rate we achieve in \Cref{thm:LB:stochastic-iv} was also the regret rate for the standard bilateral trade problem in \cite{cesa2023bilateral}, where in order to achieve learnability, on top of the stochastic and independent valuations assumptions, it was also required that the valuations admitted a bounded density.
    In that case, the regret rate degraded multiplicatively with the upper bound on that density.
    Instead, we manage to obtain this rate without the extra bounded density assumption and our bound does not explode if the bounded density constant diverges. 
    The reason is that, differently from the discontinuous gain from trade function, the fair gain from trade is $1$-Lipschitz.
\end{remark}

\section{The deterministic case}
\label{s:deterministic}
In this section, we study the deterministic case where there exist fixed but unknown constants $s,b \in [0,1]$ such that, for all time $t\in \N$, $S_t = s$ and $B_t = b$.
In this case, we note that if $b < s$ no trade can occur while, if $b = s$, even though a trade can occur for the posted price $P_t = b = s$, no gain from trade or fair gain from trade can be obtained from it.
Consequently, we focus on the only interesting case $s < b$. 

We begin by the following remark: in the standard bilateral trade setting where the reward function is the gain from trade $p \mapsto (b-s)\I\{s \le p \le b\}$, we can devise an algorithm that achieves \emph{constant} regret.
In fact, the learner can post prices following a binary search (starting by posting the price $1/2$), move to the next dyadic point to the left (resp., right) if both valuations were lower (resp., higher) than the proposed price, while keep playing the same price as soon as a successful (dyadic) price $p$ is proposed, i.e., a price $p \in [s,b]$.
This way, if the learner fails $n$ times before the first success, an upper bound on the cumulative regret suffered in the ``failure'' phase is $n \cdot 2^{-n} \le 1$, and from the subsequent rounds the instantaneous regret is always zero.

In contrast, the deterministic fair bilateral trade problem is still sufficiently layered that a costly exploration phase is unavoidable to achieve learnability given the intertwined relationship between the reward function and the feedback.
This is also in contrast to what happens in the stochastic setting, where the standard bilateral trade problem required the extra bounded density assumption to achieve learnability (and hence, in this sense, was harder) than the fair bilateral trade problem.
\begin{theorem} \label{thm:LB:deterministic}
    In the deterministic case, for any horizon $T \ge 17$, any algorithm suffers a worst-case regret larger than or equal to $\frac{1}{32} \log_2(T)$. 
\end{theorem}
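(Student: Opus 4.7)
The plan is to run an adaptive adversary that confines $(s,b)$ to the initial box $[0,1/4] \times [3/4,1]$, so that every consistent instance satisfies $m = (s+b)/2 \in [3/8,5/8]$ and $(b-s)/2 \ge 1/4$. At each round the posted price $P_t$ is then of exactly one of three kinds, relative to the current box $[s_-^{(t)}, s_+^{(t)}] \times [b_-^{(t)}, b_+^{(t)}]$: a \emph{trivial} query lies in $[s_+^{(t)}, b_-^{(t)}]$ and always produces feedback $(1,1)$ (hence reveals nothing); an \emph{informative} query lies in $(s_-^{(t)}, s_+^{(t)}) \cup (b_-^{(t)}, b_+^{(t)})$ and yields one bit about $s$ or $b$; and an \emph{out-of-range} query lies in $[0, s_-^{(t)}] \cup [b_+^{(t)}, 1]$ and forces a failed trade. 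A direct case analysis shows that every informative or out-of-range query incurs instantaneous regret at least $1/8$: either the trade fails and the regret equals $(b-s)/2 \ge 1/4$, or the trade succeeds and $|P_t - m| \ge 1/8$ (since, e.g., on the $s$-side $P_t \le 1/4$ while $m \ge 3/8$; the $b$-side is symmetric).

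The adversary maintains the invariant $[s_-^{(t)}, s_+^{(t)}] \times [b_-^{(t)}, b_+^{(t)}] \subseteq [0,1/4] \times [3/4,1]$ consistent with the feedback given so far. On each informative-$s$ (resp.\ informative-$b$) query, it bisects the current $s$-interval (resp.\ $b$-interval) and declares the feedback consistent with whichever of the two halves is the larger, so the surviving width shrinks by a factor of at most $2$. A short monotonicity check across all four bisection cases shows that the two ``safety margins'' $(s_-^{(t)} + b_-^{(t)})/2 - s_+^{(t)}$ and $b_-^{(t)} - (s_+^{(t)} + b_+^{(t)})/2$ never decrease, so the $\ge 1/8$ regret bound for informative/out-of-range queries is preserved throughout the $T$ rounds. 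If at the end $I_s$ and $I_b$ informative queries have been performed on the $s$- and $b$-sides respectively, the remaining intervals have widths at least $2^{-I_s - 2}$ and $2^{-I_b - 2}$, and AM-GM yields a final midpoint interval $[m_-, m_+]$ of width $w_T \ge 2^{-I/2 - 2}$ with $I = I_s + I_b$.

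To close the argument, the adversary commits to a final $(s, b)$ placing $m$ at whichever endpoint of $[m_-, m_+]$ maximizes $\sum_{t \text{ trivial}} |P_t - m|$. By the triangle inequality $|P_t - m_-| + |P_t - m_+| \ge w_T$, so, letting $C \ge I$ be the total number of informative and out-of-range queries, the trivial regret is at least $(T - C) w_T / 2 \ge (T - C) \cdot 2^{-C/2 - 3}$. Combined with the $C/8$ bound on the informative/out-of-range queries, this yields
\[
R_T \ge \frac{C}{8} + (T - C) \cdot 2^{-C/2 - 3}\;.
\]
A case split finishes the proof: if $C \ge \log_2(T)/4$ the first term already exceeds $\log_2(T)/32$; otherwise $2^{-C/2} > T^{-1/8}$ and the second term is at least $(T/2) \cdot T^{-1/8}/8 = T^{7/8}/16$, which exceeds $\log_2(T)/32$ for all $T \ge 17$.

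The main obstacle is the inductive verification that the two safety margins stay at least $1/8$ under every bisection; a secondary subtlety is that the adaptive-adversary construction is naturally stated against deterministic strategies, so one must either invoke Yao's minimax principle or replace the adaptive adversary with a fixed uniform distribution on a sufficiently fine dyadic grid inside $[0,1/4] \times [3/4,1]$ in order to upgrade the conclusion to arbitrary randomized algorithms.
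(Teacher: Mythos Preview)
Your proposal is correct and follows a route related to, but distinct from, the paper's. The paper fixes $b=1$ and varies only $s\in[0,1/4]$; it builds nested intervals $E_k\subseteq[0,1/4]$ of length $\tfrac14 2^{-k}$ such that any two $s,s'\in E_k$ induce identical feedback until the algorithm has placed $k$ prices in $[0,1/4]$, then picks $k^\star$ so that exactly $k^\star$ such prices are posted and chooses $s\in E_{k^\star}$ to make the average distance of the remaining prices to $(s+b)/2$ at least $\tfrac1{16}2^{-k^\star}$. This gives $R_T\ge k^\star/8+(T-k^\star)2^{-k^\star}/16$. Your version varies both $s$ and $b$ via an adaptive bisecting adversary on $[0,1/4]\times[3/4,1]$; the two-sided variation plus AM-GM yields a midpoint interval of width $2^{-I/2-2}$ rather than $2^{-I-3}$, which is why your threshold in the case split is $\log_2(T)/4$ instead of roughly $\log_2 T$. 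Both decompositions lead to the same $\Omega(\log T)$ conclusion; the paper's one-sided version is slightly cleaner because it avoids the symmetric $b$-case and the AM-GM step.

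Two minor remarks. First, your ``safety margin monotonicity'' is correct but not actually needed: since the box is always contained in $[0,1/4]\times[3/4,1]$, every informative $s$-query has $P_t<1/4$ while the final midpoint $m$ always lies in $[3/8,5/8]$, so $m-P_t>1/8$ immediately (and symmetrically on the $b$-side); the invariant you verify is just a sharper, time-indexed version of this. Second, you are right to flag the reduction to deterministic algorithms. The paper simply asserts it is without loss of generality; your observation that one should invoke Yao (or equivalently replace the adaptive adversary by a fixed prior on a dyadic grid) is the correct justification, and it applies equally to the paper's argument.
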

For the full proof of this result, see \Cref{s:appe-proof-thm:LB:deterministic}.
\begin{proof}[Proof sketch]
Since the setting is deterministic, we can restrict the proof to deterministic algorithms without loss of generality.
Then, the key property we leverage is that for any $k\in \N$, if a deterministic algorithm posts at most $k$ prices in $\lsb{0, \frac{1}{4}}$, then there is an interval $E_k \s \lsb{0,\frac{1}{4}}$ of length $\Theta(2^{-k})$ such that, for each $s,s' \in E_k$, the algorithm receives the same feedback from the environments defined by seller/buyer pairs $(s,1)$ and $(s',1)$, therefore selecting the same prices.

Now, fix a deterministic algorithm $\alpha$.
There exists $k^\star\in\{0,1,\dots,T\}$ such that, for all $s\in E_{k^\star}$, the algorithm posts exactly $k^\star$ prices in $\lsb{0,\frac{1}{4}}$.
A direct verification shows that, for any $s\in E_{k^\star}$, the instantaneous regret paid for the time steps where $P_t \in \lsb{0,\frac{1}{4}}$ is $\Omega(1)$.
Therefore, the regret paid by the algorithm for playing $k^\star$ times in $\lsb{0,\frac{1}{4}}$ is $\Omega(k^\star)$.
Moreover, let $\cT$ be the set of time steps 
$t\in [T]$ where the algorithm posts prices $P_t\in\bigl(\frac14,1\bigr]$. Leveraging the fact that the algorithm posts the same prices for every $s \in E_{k^\star}$, it can be proved that there exists an $s\in E_{k^\star}$ such that the average distance $\frac{1}{\labs \cT}\sum_{t\in\cT} \labs{P_t - \frac{b+s}{2}}$ of the points $P_t$ played at rounds $t\in\cT$ from the maximizer $\frac{b+s}{2}$ of the $\fgft$ is at least $\Omega(2^{-k^\star})$.
Putting everything together, the worst-case regret of the algorithm is lower bounded by $\Omega\brb{ k^\star + (T-k^\star)2^{-k^\star} }$, which is $\Omega(\ln T)$ regardless of the specific value of $k^\star$.
\end{proof}
A ``double'' binary search algorithm suffices to obtain a matching $O(\ln T)$ regret rate.
The idea of the algorithm is very simple.
First, spend $\Theta(\ln T)$ rounds performing a binary search for the valuation of the seller.
Then, do the same for the valuation of the buyer.
Finally, commit to the average of these two estimates for the remaining time steps.
For completeness, we report the pseudocode of this algorithm in \Cref{s:appe-proof:thm:UB:deterministic} (\cref{a:two:bit:deterministic}).
For a full proof of the following theorem, see \Cref{s:appe-proof:thm:UB:deterministic}.
\begin{theorem} \label{thm:UB:deterministic}
    In the deterministic case, the (deterministic) regret suffered by Algorithm \ref{a:two:bit:deterministic} is $O(\ln T)$.
\end{theorem}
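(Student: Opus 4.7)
The plan is to analyze the algorithm phase by phase and show that each phase contributes at most $O(\ln T)$ to the regret. Since the setting is deterministic and the feedback from a price $P_t$ is $(\I\{s \le P_t\}, \I\{P_t \le b\})$, the first coordinate alone lets us run a standard binary search for $s$ on the interval $[0,1]$, and analogously the second coordinate lets us locate $b$. I would design the algorithm so that it spends $K = \lceil \log_2 T \rceil$ rounds binary-searching for $s$, producing an estimate $\hat{s}$ with $\labs{\hat{s}-s}\le 2^{-K} \le 1/T$; then another $K$ rounds binary-searching for $b$, producing $\hat{b}$ with $\labs{\hat{b}-b}\le 1/T$; and finally commits to $\hat{p}\coloneqq (\hat{s}+\hat{b})/2$ for the remaining $T-2K$ rounds.

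The exploration bound is immediate. Since $\fgft$ takes values in $[0,1/2]$ (in fact, $\fgft(p,s,b)\le (b-s)/2 \le 1/2$), the per-round regret during the $2K$ exploration rounds is at most $1/2$, so their total contribution is at most $K = O(\ln T)$.

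For the exploitation phase, the key observation is that $p\mapsto \fgft(p,s,b) = \min\{(p-s)_+,(b-p)_+\}$ is $1$-Lipschitz and is maximized at $p^\star = (s+b)/2$ with value $(b-s)/2$. Therefore, for any price $p\in[0,1]$,
\[
\fgft(p^\star,s,b) - \fgft(p,s,b) \le \labs{p - p^\star}\;.
\]
Since $\hat{p} - p^\star = \tfrac{1}{2}\brb{(\hat{s}-s)+(\hat{b}-b)}$, the triangle inequality yields $\labs{\hat{p}-p^\star}\le \tfrac{1}{2}(2^{-K}+2^{-K})\le 1/T$. Multiplying by the at most $T$ exploitation rounds gives total exploitation regret at most $1$.

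Putting the two phases together yields $R_T \le 2K \cdot \tfrac{1}{2} + T \cdot 1/T = O(\ln T)$, as required. The only subtle point to check is that the $1$-Lipschitz bound above already absorbs the potentially troublesome case in which $\hat{p}$ falls outside $[s,b]$ (which can happen when $b-s$ is smaller than the binary-search resolution $2^{-K}$): in that regime the optimal $\fgft$ itself is at most $(b-s)/2 \le 2^{-K} \le 1/T$, so the Lipschitz bound remains valid and no separate case analysis is needed. Hence the main obstacle—reconciling a two-step approximation of $s$ and $b$ with the fair gain-from-trade objective—dissolves once the Lipschitz property is invoked.
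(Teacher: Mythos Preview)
Your proof is correct and follows essentially the same strategy as the paper's: bound the $2K$ exploration rounds trivially, then use the $1/T$ accuracy of $\hat p$ to bound each exploitation round by $1/T$. The only difference is cosmetic: where the paper splits the exploitation analysis into the cases $b-s\le 2/T$ and $b-s>2/T$ and explicitly lower-bounds each term inside the $\min$, you invoke the $1$-Lipschitzness of $p\mapsto\fgft(p,s,b)$ once to get $\fgft(p^\star,s,b)-\fgft(\hat p,s,b)\le\labs{\hat p-p^\star}$, which subsumes both cases (including $\hat p\notin[s,b]$) in a single line.
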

\begin{proof}[Proof sketch]
The idea is first to invest a budget of order $\ln T$ to locate the seller's valuation with error $O(1/T)$.
Second, we proceed similarly to locate the buyer's valuation with error $O(1/T)$.
Finally, this yields a $O(1/T)$-precise estimate of the optimal price $(s+b)/2$, and we commit to this estimate for the remaining time steps.
The regret incurred by the two first phases is of order $\ln T$ while the regret of the last phase is of order $T \cdot \frac{1}{T}$, and thus bounded.
\end{proof}

\section{The full-feedback model}
We conclude this paper by quantifying the cost of partial information by analyzing the full feedback model, where, after posting the price $P_t$, the learner has access to both $S_t$ and $B_t$.
We show that the cost is two-fold: slower rates (both in the deterministic ---$\Theta(\ln T)$ vs $\Theta(1)$--- and stochastic case ---$\Theta(T^{2/3})$ vs $\Theta(\sqrt{T})$) and the need for additional assumptions (independence of buyer and seller valuations in the stochastic case).

For the full-feedback setting, we show that following the best empirical price leads to an algorithm (\Cref{a:follow-convolution}) whose regret guarantees are optimal.  
We remark that \Cref{a:follow-convolution} needs full feedback to run, given that it needs to compute the $\fgft$ function for counterfactual prices. For the full proof of the next theorem, see \Cref{s:appe-proof-thm:UB:stochastic-full}.

\begin{algorithm}
\textbf{Initialization}: Select $P_1 \coloneqq 1/2$\;
\For
{%
    time $t=1,2,\dots$
}
{
    Post price $P_t$, and receive feedback $(S_t,B_t)$\;
    Select $P_{t+1} \in \argmax_{p\in[0,1]} \frac{1}{t} \sum_{s=1}^{t} \fgft(p,S_t,B_t)$ 
}
\caption{Follow the Best Empirical Price}
\label{a:follow-convolution}
\end{algorithm}

\begin{theorem}
    \label{thm:UB:stochastic-full}
    In the stochastic full-feedback case, the regret suffered by \Cref{a:follow-convolution} is $O(\sqrt{T})$.
    In the deterministic setting, the (deterministic) regret of Algorithm~\ref{a:follow-convolution} is upper bounded by $1/2$.
\end{theorem}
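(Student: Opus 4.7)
The plan is to run a standard Follow-the-Leader (FTL) analysis on top of a uniform convergence bound for the empirical $\fgft$. Write $F(p):=\E\bsb{\fgft(p,S_1,B_1)}$, $\hat F_t(p):=\tfrac{1}{t}\sum_{s=1}^{t}\fgft(p,S_s,B_s)$, and let $p^\star\in\argmax_{p\in[0,1]}F(p)$. Since $P_{t+1}\in\argmax_{p}\hat F_t(p)$ and $(S_{t+1},B_{t+1})$ is independent of $(S_1,B_1,\ldots,S_t,B_t)$, the usual FTL decomposition
\[
F(p^\star)-F(P_{t+1}) = \bsb{F(p^\star)-\hat F_t(p^\star)} + \bsb{\hat F_t(p^\star)-\hat F_t(P_{t+1})} + \bsb{\hat F_t(P_{t+1})-F(P_{t+1})},
\]
together with the non-positivity of the middle bracket (by definition of $P_{t+1}$), bounds the per-round expected regret by $2\,\E\bsb{\sup_{p\in[0,1]}\babs{\hat F_t(p)-F(p)}}$.

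The key step is therefore to prove that this uniform deviation is $\scO(1/\sqrt{t})$, with no spurious logarithmic factor. Since $p\mapsto \fgft(p,s,b)=\min\{(p-s)_+,(b-p)_+\}$ is $1$-\lip\ in $p$ and uniformly bounded by $1/2$, the class $\bcb{(s,b)\mapsto\fgft(p,s,b):p\in[0,1]}$ admits an $L^\infty$-covering of cardinality at most $1/\e$ at any scale $\e$. I would then combine symmetrization with Dudley's entropy integral to obtain
\[
\E\Bsb{\sup_{p\in[0,1]}\babs{\hat F_t(p)-F(p)}}\le\frac{C}{\sqrt t}\int_0^{1/2}\sqrt{\log(1/\e)}\,\dif\e,
\]
where the integral is a finite constant. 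Summing the per-round bound then gives $R_T\le F(P_1)+\sum_{t=1}^{T-1}\scO(1/\sqrt t)=\scO(\sqrt T)$, as claimed in the stochastic part.

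The deterministic statement is essentially immediate: with $S_t\equiv s$ and $B_t\equiv b$ (the only non-trivial case being $s<b$), one has $\hat F_1(p)=\fgft(p,s,b)$, a tent function uniquely maximized at $(s+b)/2$; hence $P_t=(s+b)/2$ for every $t\ge 2$ and only the first round contributes, giving total regret at most $\fgft((s+b)/2,s,b)\le (b-s)/2\le 1/2$.

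The main technical obstacle is keeping the uniform-convergence rate at the clean $1/\sqrt t$ rather than $\sqrt{\log t/t}$: a naive $1/\sqrt t$-grid discretization with Hoeffding and a union bound would only produce the weaker $\Otilde(\sqrt T)$ regret rate, so a chaining argument is really needed. A secondary subtlety is that $F$ need not be concave (it is a mixture of tent functions), which rules out curvature-based FTL-stability arguments and forces us through empirical-process theory; fortunately, the $1$-Lipschitzness of $\fgft$ in $p$ together with the one-dimensionality of the price space is exactly what makes Dudley's integral finite.
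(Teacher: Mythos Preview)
Your argument is correct, and the deterministic part matches the paper's exactly. For the stochastic part, however, you take a genuinely different route from the paper.

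The paper does not go through chaining. Instead it invokes the Convolution Lemma~\eqref{eq:convolution-1} to write both $F(p)$ and $\hat F_t(p)$ as integrals over $u\in[0,1]$ of, respectively, $\Pb\bsb{S\le p-u,\,-B\le -p-u}$ and its empirical counterpart $\tfrac{1}{t}\sum_i\I\{S_i\le p-u,\,-B_i\le -p-u\}$. The uniform deviation $\sup_p\babs{\hat F_t(p)-F(p)}$ is then dominated by the supremum over all half-planes $(-\infty,x]\times(-\infty,y]$ of the difference between the empirical and true bivariate CDF of $(S,-B)$, and the two-dimensional DKW inequality delivers the $\scO(1/\sqrt t)$ rate directly (with explicit, if large, constants).

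Your approach---symmetrization plus Dudley's entropy integral on the class $\bcb{(s,b)\mapsto\fgft(p,s,b):p\in[0,1]}$, whose $L^\infty$ metric entropy is $\scO(\log(1/\e))$ by $1$-Lipschitzness in $p$---is arguably more robust: it uses only the Lipschitz property and would apply verbatim to any other $1$-Lipschitz reward, without the algebraic identity behind the Convolution Lemma. The paper's route, on the other hand, ties the bound to a concrete known inequality (bivariate DKW) and yields explicit constants, at the price of being specific to the $\fgft$ structure. Both avoid the $\sqrt{\log t}$ loss that a naive grid-plus-Hoeffding argument would incur, which is the point you correctly flag as the main technical obstacle.
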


\begin{proof}[Proof sketch]
The deterministic case is immediate: by posting $1/2$, the regret in the first round is at most $1/2$, and the algorithm pays no regret in the subsequent rounds.
For the stochastic case, due to \cref{l:fair_decomposition}, note that $ \frac{1}{t} \sum_{i=1}^t \fgft(p,S_t,B_t) = \int_0^1 \frac{1}{t} \sum_{i=1}^t \I\{ S_i \le p - u\}\I\{  p+u \le B_i \} \dif u \eqqcolon \hat G_t (p) $.
Now, by the two-dimensional DKW inequality, it can be proved that the empirical estimates $\hat G_t(p)$ are uniformly $\e$-close (in $p$) to $\E\bsb{ \fgft(p,S_{t+1},B_{t+1})}$ with probability $1-O\brb{ e^{-\Omega(\e^2t)} }$.
These probability estimates, together with the fact that the algorithm selects $P_{t+1}$ by maximizing $p \mapsto \hat G_t(p)$, translates to a $O\lrb{\fracc{1}{\sqrt{t+1}}}$-control over the expectation $ \max_{0 \le p \le 1}\E\bsb{ \fgft(p,S_{t+1},B_{t+1}) - \fgft(P_{t+1},S_{t+1},B_{t+1}) } $.
The conclusion follows by summing over time steps.
\end{proof}
We now show that the guarantees provided by \Cref{a:follow-convolution} are worst-case optimal, even if sellers' and buyers' valuations are required to be independent of each other.
\begin{theorem}
    \label{thm:LB:stochastic-full}
    There exists a constant $c>0$ such that the following holds.
    For every algorithm for the fair bilateral trade problem with full feedback and for every time horizon $T \in \N$, there exists an i.i.d.\ sequence $(S_t,B_t)_{t\in [T]}$ of pairs of sellers' and buyers' valuations such that, for each $t \in [T]$, $S_t$ is independent of $B_t$, and the algorithm suffers regret at least $c \sqrt{T}$ on that sequence.
\end{theorem}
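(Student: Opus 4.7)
The plan is to apply a Le Cam-style two-point argument to a pair of stochastic instances whose optimal prices lie in disjoint regions but which are nearly indistinguishable even under full feedback. Specifically, for a small $\e > 0$ to be tuned later, I would take $S_t = 0$ deterministically (so that $S_t$ and $B_t$ are trivially independent) and let $B_t$ be supported on $\{1/2, 1\}$ with $\Pb[B_t = 1] = 1/2 + \e$ under instance $\mu_{+\e}$, and $\Pb[B_t = 1] = 1/2 - \e$ under instance $\mu_{-\e}$.

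The first key step is a piecewise computation of $p \mapsto \E\bsb{\fgft(p,S_t,B_t)}$. I expect this to show that under $\mu_{+\e}$ the optimum is at $p^\star = 1/2$ with value $1/4 + \e/2$, while under $\mu_{-\e}$ the optimum is at $p^\star = 1/4$ with value $1/4$. Choosing the separating threshold $3/8$, the same case analysis should yield uniform per-round regret bounds: under $\mu_{+\e}$, every $p < 3/8$ has instantaneous regret at least $\e/4$, and under $\mu_{-\e}$, every $p \ge 3/8$ has instantaneous regret at least $\e/4$. Defining $N_T \coloneqq \labs{\bcb{t \in [T] : P_t < 3/8}}$, these bounds give
\[
    R_T(\mu_{+\e}) \ge \tfrac{\e}{4}\, \E_{\mu_{+\e}}[N_T],\qquad R_T(\mu_{-\e}) \ge \tfrac{\e}{4}\, \E_{\mu_{-\e}}[T - N_T].
\]

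The second step is the information-theoretic indistinguishability bound. Since $S_t$ is deterministic, the full-feedback observation per round reduces to a $\mathrm{Ber}(1/2 \pm \e)$ coin, whose single-sample KL divergence is $O(\e^2)$. By the chain rule for KL (applied to the joint distribution of the history, which factorizes along rounds for any, possibly randomized, algorithm) and Pinsker's inequality, the total variation between the two history distributions is $O(\e\sqrt{T})$. Since $N_T$ takes values in $[0,T]$, this yields $\labs{\E_{\mu_{+\e}}[N_T] - \E_{\mu_{-\e}}[N_T]} \le T\cdot O(\e\sqrt{T}) = O(\e T^{3/2})$. Setting $\e \coloneqq c/\sqrt{T}$ for a small enough absolute constant $c$, the right-hand side becomes at most $T/4$, and hence at least one of $\E_{\mu_{+\e}}[N_T] \ge T/4$ or $\E_{\mu_{-\e}}[T - N_T] \ge T/4$ must hold, yielding regret at least $(\e/4)(T/4) = \Omega(\sqrt{T})$ on the corresponding instance.

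The main obstacle I foresee is the (elementary but tedious) piecewise computation needed to certify the uniform $\Omega(\e)$ per-round regret on the two ``bad'' half-intervals; once that is in hand, the remainder reduces to a textbook Le Cam reduction and presents no substantial difficulty.
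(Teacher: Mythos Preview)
Your proposal is correct and follows essentially the same approach as the paper: a two-point Le Cam argument on Bernoulli-$(1/2\pm\e)$ instances whose optimal prices lie in disjoint regions, with $\e=\Theta(T^{-1/2})$. The only cosmetic difference is that the paper reuses the instances from its $\Omega(T^{2/3})$ bound (random $S_t\in\{0,1/4\}$, deterministic $B_t=1$, optimal prices $1/2$ vs.\ $5/8$), whereas you take the mirror construction (deterministic $S_t=0$, random $B_t\in\{1/2,1\}$, optimal prices $1/2$ vs.\ $1/4$); your explicit Pinsker step is simply a clean formalization of what the paper calls ``information-theoretic arguments''.
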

Since the proof of this result follows along the same lines as the proof of \Cref{thm:LB:stochastic-iv} (whose full details can be found in \Cref{s:appe-proof-thm:LB:stochastic-iv}), we present only a proof sketch.
\begin{proof}[Proof sketch]
    The hard instances are the same as those in the proof of \Cref{thm:LB:stochastic-iv} for the $\Omega(T^{2/3})$ lower bound in the stochastic i.i.d.\ case with independent sellers' and buyers' valuations.
    We refer to the proof sketch of \Cref{thm:LB:stochastic-iv} for the relevant notation and observations.
    We recall that in those instances to avoid suffering $\Omega\brb{|\e| T}$ regret we need to distinguish the sign of $\e$.
    Now, given that full-feedback is available, the information we retrieve after each interaction is equivalent to observing a Bernoulli r.v.\ of parameter $\frac{1+\e}{2}$, regardless of the price we posted.
    Again, information-theoretic arguments imply that we need $\Omega\brb{\frac{1}{\e^2}}$ rounds before being able to distinguish the sign of $\e$.
    During these rounds, the best we can do is to play (essentially) at random in the candidate set of optimal prices $\{1/2,5/8\}$, suffering an expected instantaneous regret of $\Omega(|\e|)$.
    Overall, we suffer $\Omega\lrb{\min\lcb{|\e|\cdot \frac{1}{\e^2} ,|\e|T}}$ cumulative regret, which leads to the claimed $\Omega(\sqrt{T})$ regret rate once we tune $|\e| = \Theta\lrb{\sqrt{\frac{1}{T}}}$.
\end{proof}

\begin{remark}
     In the same spirit of \Cref{r:adversarial}, \Cref{thm:LB:stochastic-full} together with Yao's Minimax Theorem immediately imply that any \emph{full-feedback} algorithm to solve the \emph{adversarial} fair bilateral trade problem has to suffer regret of at least $\Omega(\sqrt{T})$.
     A nearly matching (up to logarithmic factors) upper bound for this problem can be deduced using the algorithm Hedge for $[0,1]$-Armed Experts (see \cite[Appendix A]{cesa2024regret}).
     We leave to future research the understanding of whether the adversarial case is logarithmically harder than the stochastic case or whether a different algorithm can achieve better regret rates.
\end{remark}

\section{Limitations and conclusions}
\label{s:conclusions}
Our analysis is based on different assumptions on the generation of the sellers' and buyers' valuations (stochastic, stochastic and independent, deterministic, adversarial). Our results characterize (up to constant or log factors) the regret rates in each case. Hence, there are no specific \textit{a priori} assumptions that we need to make to prove our results.

Our results can be extended in different directions. For example, by linking the unknown valuations of sellers and buyers to contextual information visible to the platform. In practice, platforms simultaneously deal with multiple sellers and buyers, in which case the platform may have to post a set of prices and consequently operate in a multidimensional decision space. Finally, our result for the stochastic setting with independence between sellers and buyers is only tight up to logarithmic factors, so an improved analysis of the upper or lower bound is needed.

\section*{Acknowledgements}
The work of FB was supported by the Projects BOLD (ANR-19-CE23-0026) and GAP (ANR-21-CE40-0007) of the French National Research Agency (ANR).
NCB and RC are partially supported by the MUR PRIN grant 2022EKNE5K (Learning in Markets and Society), funded by the NextGenerationEU program within the PNRR scheme, the FAIR (Future Artificial Intelligence Research) project, funded by the NextGenerationEU program within the PNRR-PE-AI scheme, the EU Horizon CL4-2022-HUMAN-02 research and innovation action under grant agreement 101120237, project ELIAS (European Lighthouse of AI for Sustainability).
TC gratefully acknowledges the support of the University of Ottawa through grant GR002837 (Start-Up Funds) and that of the Natural Sciences and Engineering Research Council of Canada (NSERC) through grants RGPIN-2023-03688 (Discovery Grants Program) and DGECR2023-00208 (Discovery Grants Program, DGECR - Discovery Launch Supplement).

\bibliographystyle{abbrv}
\bibliography{biblio}

\begin{thebibliography}{10}

\bibitem{aleksandrov2020online}
M.~Aleksandrov and T.~Walsh.
\newblock Online fair division: A survey.
\newblock In {\em Proceedings of the AAAI Conference on Artificial Intelligence}, volume~34, pages 13557--13562, 2020.

\bibitem{azar2022alpha}
Y.~Azar, A.~Fiat, and F.~Fusco.
\newblock An $\alpha$-regret analysis of adversarial bilateral trade.
\newblock {\em Advances in Neural Information Processing Systems}, 35:1685--1697, 2022.

\bibitem{bechavod2019equal}
Y.~Bechavod, K.~Ligett, A.~Roth, B.~Waggoner, and S.~Z. Wu.
\newblock Equal opportunity in online classification with partial feedback.
\newblock {\em Advances in Neural Information Processing Systems}, 32, 2019.

\bibitem{BernasconiCCF24}
M.~Bernasconi, M.~Castiglioni, A.~Celli, and F.~Fusco.
\newblock No-regret learning in bilateral trade via global budget balance.
\newblock In {\em {STOC}}. {ACM}, 2024.

\bibitem{blum2018preserving}
A.~Blum, S.~Gunasekar, T.~Lykouris, and N.~Srebro.
\newblock On preserving non-discrimination when combining expert advice.
\newblock {\em Advances in Neural Information Processing Systems}, 31, 2018.

\bibitem{bolic2024online}
N.~Boli\'{c}, T.~Cesari, and R.~Colomboni.
\newblock An online learning theory of brokerage.
\newblock In {\em Proceedings of the 23rd International Conference on Autonomous Agents and Multiagent Systems}, AAMAS '24, page 216–224, Richland, SC, 2024. International Foundation for Autonomous Agents and Multiagent Systems.

\bibitem{celis2019controlling}
L.~E. Celis, S.~Kapoor, F.~Salehi, and N.~Vishnoi.
\newblock Controlling polarization in personalization: An algorithmic framework.
\newblock In {\em Proceedings of the conference on fairness, accountability, and transparency}, pages 160--169, 2019.

\bibitem{cesa2023bilateral}
N.~Cesa-Bianchi, T.~Cesari, R.~Colomboni, F.~Fusco, and S.~Leonardi.
\newblock Bilateral trade: A regret minimization perspective.
\newblock {\em Mathematics of Operations Research}, 49(1):171--203, 2024.

\bibitem{cesa2024regret}
N.~Cesa-Bianchi, T.~Cesari, R.~Colomboni, F.~Fusco, and S.~Leonardi.
\newblock Regret analysis of bilateral trade with a smoothed adversary.
\newblock Technical report, hal preprint hal-04383576, 2024.

\bibitem{cesa2021regret}
N.~Cesa-Bianchi, T.~R. Cesari, R.~Colomboni, F.~Fusco, and S.~Leonardi.
\newblock A regret analysis of bilateral trade.
\newblock In {\em Proceedings of the 22nd ACM Conference on Economics and Computation}, pages 289--309, USA, 2021. Association for Computing Machinery.

\bibitem{cesa2023repeated}
N.~Cesa-Bianchi, T.~R. Cesari, R.~Colomboni, F.~Fusco, and S.~Leonardi.
\newblock Repeated bilateral trade against a smoothed adversary.
\newblock In {\em The Thirty Sixth Annual Conference on Learning Theory}, pages 1095--1130, USA, 2023. PMLR, PMLR.

\bibitem{cesa2006prediction}
N.~Cesa-Bianchi and G.~Lugosi.
\newblock {\em Prediction, learning, and games}.
\newblock Cambridge University Press, UK, 2006.

\bibitem{cesa2006regret}
N.~Cesa-Bianchi, G.~Lugosi, and G.~Stoltz.
\newblock Regret minimization under partial monitoring.
\newblock {\em Mathematics of Operations Research}, 31(3):562--580, 2006.

\bibitem{cesari2021nearest}
T.~R. Cesari and R.~Colomboni.
\newblock A nearest neighbor characterization of {L}ebesgue points in metric measure spaces.
\newblock {\em Mathematical Statistics and Learning}, 3(1):71--112, 2021.

\bibitem{chzhen2021unified}
E.~Chzhen, C.~Giraud, and G.~Stoltz.
\newblock A unified approach to fair online learning via {B}lackwell approachability.
\newblock {\em Advances in Neural Information Processing Systems}, 34:18280--18292, 2021.

\bibitem{ciccone2020fairness}
A.~Ciccone, O.~Rogeberg, and R.~Braaten.
\newblock Fairness preferences in a bilateral trade experiment.
\newblock {\em Games}, 11(1):8, 2020.

\bibitem{gaucher2022price}
S.~Gaucher, A.~Carpentier, and C.~Giraud.
\newblock The price of unfairness in linear bandits with biased feedback.
\newblock {\em Advances in Neural Information Processing Systems}, 35:18363--18376, 2022.

\bibitem{gillen2018online}
S.~Gillen, C.~Jung, M.~Kearns, and A.~Roth.
\newblock Online learning with an unknown fairness metric.
\newblock {\em Advances in neural information processing systems}, 31, 2018.

\bibitem{hossain2021fair}
S.~Hossain, E.~Micha, and N.~Shah.
\newblock Fair algorithms for multi-agent multi-armed bandits.
\newblock {\em Advances in Neural Information Processing Systems}, 34:24005--24017, 2021.

\bibitem{kalai1975other}
E.~Kalai and M.~Smorodinsky.
\newblock Other solutions to {N}ash's bargaining problem.
\newblock {\em Econometrica: Journal of the Econometric Society}, pages 513--518, 1975.

\bibitem{lattimore2020bandit}
T.~Lattimore and C.~Szepesv{\'a}ri.
\newblock {\em Bandit algorithms}.
\newblock Cambridge University Press, 2020.

\bibitem{li2019combinatorial}
F.~Li, J.~Liu, and B.~Ji.
\newblock Combinatorial sleeping bandits with fairness constraints.
\newblock {\em IEEE Transactions on Network Science and Engineering}, 7(3):1799--1813, 2019.

\bibitem{liao2022nonstationary}
L.~Liao, Y.~Gao, and C.~Kroer.
\newblock Nonstationary dual averaging and online fair allocation.
\newblock {\em Advances in Neural Information Processing Systems}, 35:37159--37172, 2022.

\bibitem{patil2021achieving}
V.~Patil, G.~Ghalme, V.~Nair, and Y.~Narahari.
\newblock Achieving fairness in the stochastic multi-armed bandit problem.
\newblock {\em Journal of Machine Learning Research}, 22(174):1--31, 2021.

\bibitem{wang2021fairness}
L.~Wang, Y.~Bai, W.~Sun, and T.~Joachims.
\newblock Fairness of exposure in stochastic bandits.
\newblock In {\em International Conference on Machine Learning}, pages 10686--10696. PMLR, 2021.

\bibitem{yamada2024learning}
H.~Yamada, J.~Komiyama, K.~Abe, and A.~Iwasaki.
\newblock Learning fair division from bandit feedback.
\newblock In {\em International Conference on Artificial Intelligence and Statistics}, pages 3106--3114. PMLR, 2024.

\end{thebibliography}


\appendix

\section{Proof of Theorem \ref{thm:LB:stochastic-iv}}
\label{s:appe-proof-thm:LB:stochastic-iv}
Fix any $\e\in[0,1/4]$ and let
\[
\mu_{\pm\e}\coloneqq\frac{1\pm\e}{2}\delta_{0}+\frac{1\mp\e}{2}\delta_{1/4}
\]
where $\delta_a$ is the Dirac measure centered at a point $a\in\bbR$.
 Noting that, for all $p\in[0,1]$, 
\begin{align*}
p\le1-p & \iff p\le\frac{1}{2}\\
p-\frac{1}{4}\le1-p & \iff p\le\frac{5}{8}
\end{align*}
and letting $S^{\pm}$ be a random variable with distribution $\mu_{\pm}$
and $B=1$, we have
\begin{align*}
f^{\pm}(p)\coloneqq\E \Bsb{ \min \bcb{ (p-S^{\pm})_{+},(B-p)_{+} } } & =\E \Bsb{ \min \bcb{ (p-S^{\pm})_{+},1-p } }\\
 & =\min\left\{ p,1-p\right\} \frac{1\pm\e}{2}+\min\left\{ \lrb{ p-\frac{1}{4} }_{+},1-p\right\} \frac{1\mp\e}{2}\\
 & =\begin{cases}
\frac{1\pm\e}{2}p & p\in[0,1/4)\\
\frac{1\pm\e}{2}p+\left(p-\frac{1}{4}\right)\frac{1\mp\e}{2} & p\in[1/4,1/2)\\
\left(1-p\right)\frac{1\pm\e}{2}+\left(p-\frac{1}{4}\right)\frac{1\mp\e}{2} & p\in[1/2,5/8)\\
\left(1-p\right)\frac{1\pm\e}{2}+\left(1-p\right)\frac{1\mp\e}{2} & p\in[5/8,1]
\end{cases}\\
 & =\begin{cases}
\frac{1\pm\e}{2}p & p\in[0,1/4)\\
\frac{1\pm\e}{8}+\left(p-\frac{1}{4}\right) & p\in[1/4,1/2)\\
\left(\frac{1\pm\e}{8}+\frac{1}{4}\right)\pm\e\left(\frac{1}{2}-p\right) & p\in[1/2,5/8)\\
\left(\frac{1\pm\e}{8}+\frac{1}{4}\mp\frac{\e}{8}\right)+\left(\frac{5}{8}-p\right) & p\in[5/8,1]
\end{cases}
\end{align*}
Note that: 
\begin{align*}
\argmax f^{+}=\frac{1}{2} & \qquad\text{and}\qquad f^{+}\left(\frac{1}{2}\right)=\frac{1+\e}{8}+\frac{1}{4}=\frac{3+\e}{8}\\
\argmax f^{-}=\frac{5}{8} & \qquad\text{and}\qquad f^{-}\left(\frac{5}{8}\right)=\frac{1-\e}{8}+\frac{1}{4}+\frac{\e}{8}=\frac{3}{8}
\end{align*}
Consider the three regions $I^{\mathrm{sub}}=[0,1/4)$, $I^{+}=[1/4,9/16)$,
$I^{-}=(9/16,1]$. At a high level: $I^{\mathrm{sub}}$ is an informative,
but $\Theta(1)$-suboptimal region; in contrast, regions $I^{+}$
and $I^{-}$ are both uninformative, but, when the underlying distribution
of the sellers' valuations is $\mu_{+\e}$ (resp, $\mu_{-\e}$) $I^{+}$
(resp, $I^{-}$) contains the maximizer $p_{+}^{\star}=\frac{1}{2}$
(resp, $p_{-}^{\star}=\frac{5}{8}$) while all points is $I^{-}$
(resp., $I^{+}$) are $\Theta(\e)$-suboptimal. More precisely:
\begin{itemize}
\item Since $B=1$, the feedback $\I\{p\le B\}$ is immaterial.
\item The feedback $\I\{S_{t}^{\pm}\le p\}$ observed by posting a price
$p\in I^{\mathrm{sub}}$ at a time $t\in\N$ is a Bernoulli random
variable with parameter $\frac{1\pm\e}{2}$. Moreover, for all $p\in I^{\mathrm{sub}}$
\begin{align*}
\max f^{+}-f^{+}\left(p\right) & =f^{+}\left(\frac{1}{2}\right)-f^{+}\left(p\right)=\frac{3+\e}{8}-\frac{1+\e}{2}p\ge\frac{1}{4}\\
\max f^{-}-f^{-}\left(p\right) & =f^{-}\left(\frac{5}{8}\right)-f^{-}\left(p\right)=\frac{3}{8}-\frac{1-\e}{2}p\ge\frac{1}{4}
\end{align*}
At a high level, posting a price $p\in I^{\mathrm{sub}}$ reveals
information about the underlying distribution of the sellers' valuations
but at the cost of an instantaneous regret of at least $1/4$.
\item The feedback $\I\{S^\pm_{t}\le p\}$ observed by posting a price $p\in I^{+}$
at a time $t\in\N$ when the underlying sellers' distribution is $\mu_{\pm\e}$
is a constant random variable always equal to $1$. Moreover, $\argmax f^{+}=\frac{1}{2}\in I^{+}$,
but, for all $p\in I^{+}$,
\[
\max f^{-}-f^{-}\left(p\right)=\begin{cases}
\frac{3}{8}-\left(\frac{1-\e}{8}+\left(p-\frac{1}{4}\right)\right)\ge\frac{\e}{8}\ge\frac{\e}{16} & \text{ if }p\in[1/4,1/2)\\
\frac{3}{8}-\left(\left(\frac{1-\e}{8}+\frac{1}{4}\right)-\e\left(\frac{1}{2}-p\right)\right)\ge\frac{\e}{16} & \text{ if }p\in[1/2,9/16)
\end{cases}
\]
At a high level, posting a price $p\in I^{+}$ reveals no information
about the underlying distribution of the sellers' valuations, the
region $I^{+}$ contains the optimal price when
 the underlying distribution of the sellers' valuations is $\mu_{+\e}$, but posting
in $I^{+}$ has an instantaneous regret of at least $\e/16$ when
the underlying distribution of the sellers' valuations is $\mu_{-\e}$.
\item  The feedback $\I\{S^\pm_{t}\le p\}$ observed by posting a price $p\in I^{-}$
at a time $t\in\N$ 
when the underlying sellers' distribution is $\mu_{\pm\e}$
is a constant random
variable always equal to $1$. 
Moreover, $\argmax f^-=\frac{5}{8}\in I^{-}$,
but, for all $p\in I^{-}$,
\[
\max f^{+}-f^{+}\left(p\right)=\begin{cases}
\frac{1+\e}{8}+\frac{1}{4}-\left(\left(\frac{1+\e}{8}+\frac{1}{4}\right)+\e\left(\frac{1}{2}-p\right)\right)\ge\frac{\e}{16} & \text{ if }p\in[9/16,5/8)\\
\frac{1+\e}{8}+\frac{1}{4}-\left(\left(\frac{1+\e}{8}+\frac{1}{4}-\frac{\e}{8}\right)+\left(\frac{5}{8}-p\right)\right)\ge\frac{\e}{8}\ge\frac{\e}{16} & \text{ if }p\in[5/8,1]
\end{cases}
\]
At a high level, posting a price $p\in I^{-}$ reveals no information
about the underlying distribution of the sellers' valuations, the
region $I^{-}$ contains the optimal price when
 the underlying distribution of the sellers' valuations is $\mu_{-\e}$, but posting
in $I^{-}$ has an instantaneous regret of at least $\e/16$ when
the underlying distribution of the sellers' valuations is $\mu_{+\e}$.
\end{itemize}
We will now show that these two ``$+$'' and ``$-$'' instances
are no easier than two corresponding instances of a related partial
monitoring game (for the reader unfamiliar with partial monitoring, see, e.g, \cite[Chapter 27]{lattimore2020bandit}). Consider the following partial monitoring game, where,
again $\e\in[0,1/4]$, the learner's action set is $\{1,2,3\}$, the
environment's outcome set is $\{1,2\}$, the reward ($\rho$) and
feedback $(\fhi$) matrices are, respectively, 
\[
\rho\coloneqq\left[\begin{matrix}0 & 0\\
1/16 & 0\\
0 & 1/16
\end{matrix}\right]\qquad\text{and}\qquad\fhi\coloneqq\left[\begin{matrix}1 & 0\\
0 & 0\\
0 & 0
\end{matrix}\right]
\]
and where the environment is constrained to draw outcomes i.i.d. from
one of two distributions: the first of which has a probability $\frac{1}{2}+\frac{\e}{2}$
of drawing outcome $1$, while the second has a smaller probability
of $\frac{1}{2}-\frac{\e}{2}$ of drawing outcome $1$. More precisely,
let $(A_{1}^{+},A_{1}^{-},A_{2}^{+},A_{2}^{-},\dots)$ be an independent
sequence of $\{1,2\}$-valued random variables such that $(A_{t}^{+})_{t\in\N}$
is an i.i.d. sequence satisfying $\Pb[A_{t}^{+}=1]=\frac{1}{2}+\frac{\e}{2}$
for all $t\in\N$, and $(A_{t}^{-})_{t\in\N}$ is an i.i.d. sequence
satisfying $\Pb[A_{t}^{-}=1]=\frac{1}{2}-\frac{\e}{2}$ for all $t\in\N$.
In the ``$+$'' (resp, a ``$-$'') scenario, at any time $t$,
the environment plays outcome $A_{t}^{+}$ (resp., $A_{t}^{-}$).
Therefore, in the ``$+$'' scenario, the learner's optimal action
is action $2$, and, for all $t\in\N$, the instantaneous expected
regrets paid by the learner for playing action $1$ (resp., $3$)
is $\E[\rho(2,A_{t}^{+})]-\E[\rho(1,A_{t}^{+})]=\frac{1}{32}+\frac{\e}{32}\le\frac{1}{4}$
(resp., $\E[\rho(2,A_{t}^{+})]-\E[\rho(3,A_{t}^{+})]=\left(\frac{1}{32}+\frac{\e}{32}\right)-\left(\frac{1}{32}-\frac{\e}{32}\right)=\frac{\e}{16}$).
Similarly, in the ``$-$'' scenario, the learner's optimal action
is action $3$, and, for all $t\in\N$, the instantaneous expected
regrets paid by the learner for playing action $1$ (resp., $2$)
is $\E[\rho(3,A_{t}^{-})]-\E[\rho(1,A_{t}^{-})]=\frac{1}{32}+\frac{\e}{32}\le\frac{1}{4}$
(resp., $\E[\rho(3,A_{t}^{-})]-\E[\rho(2,A_{t}^{-})]=\left(\frac{1}{32}+\frac{\e}{32}\right)-\left(\frac{1}{32}-\frac{\e}{32}\right)=\frac{\e}{16}$).
Finally, playing actions $2$ or $3$ reveals no feedback, while the
feedback $\fhi(1,A_{t}^{\pm})$ that the learner observes after playing
action $1$ at time $t\in\N$ in scenario ``$\pm$'' is a Bernoulli
random variable with parameter $\frac{1}{2}\pm\frac{\e}{2}$.

Note now that every algorithm for the fair bilateral trade problem
can be turned into an algorithm for this partial monitoring game by
simply mapping the price $P_{t}$ that the algorithm would post at
any time $t\in\N$ into the action 
\[
J_{t}=\begin{cases}
1 & \text{ if }P_{t}\in I^{\mathrm{sub}}\\
2 & \text{ if }P_{t}\in I^{+}\\
3 & \text{ if }P_{t}\in I^{-}
\end{cases}
\]
then feeding back to the algorithm the pair $(\fhi(I_{t},A_{t}^{\pm}),1)$.
By construction, this is the same feedback (more precisely, it is
a Bernoulli random variable drawn i.i.d.\ from the same distribution)
that the algorithm would observe by playing $P_{t}$ in the original
fair bilateral trade problem, and the expected regret of the partial
monitoring game is less than or equal to that of the original fair
bilateral trade problem. Therefore, by lower bounding the regret of
the partial monitoring game, we \emph{a fortiori} lower bound the
regret of the fair bilateral trade problem. To conclude the proof,
we simply remark that for any algorithm for this partial monitoring
game there exists a choice of $\e\in[0,1/4]$ such that the regret
of the algorithm is at least $T^{2/3}/112$ (see, e.g., \cite[Theorem 5.1]{cesa2006regret}, up to turning losses into rewards and rescaling
them by multiplying by $1/16$).

\section{Proof of Theorem \ref{thm:LB:deterministic}}
\label{s:appe-proof-thm:LB:deterministic}

Given that we are in a deterministic setting, without loss of generality we (can) restrict our analysis to deterministic algorithms.
Let $b = 1$.
We begin by proving the following property by induction on $k = 0,1,2,\dots$.

\textbf{Property.} There is a segment $E_k$ of length $\frac{1}{4} 2^{-k}$ satisfying $E_k \s \lsb{0,\frac{1}{4}}$ such that, for each $s,s' \in E_k$, the algorithm receives the same feedback from the environments defined by $(s,b)$ and $(s',b)$---and thus selects the same query prices---as long as it has allocated at most $k$ prices in the segment $\lsb{0 , \frac{1}{4}}$.

For $k = 0$ the property is true by setting $E_0 \coloneqq [0,\frac{1}{4}]$, since as long as the algorithm posts no prices in $[0,\frac{1}{4}]$, it receives for each price $P_t \in (\frac{1}{4},1]$ the feedback $\I\{s \le P_t\} = \I\lcb{\frac{1}{4} \le P_t} = 1 $ and $\I\{P_t \le b\} = \I\{P_t \le 1\} = 1$.

Assume that the property is true for some $k \in \{0,1,2,\dots\}$.
We set $[c,d] \coloneqq E_k$ for the segment of the property.
For any $s \in E_k$, the algorithm has a behavior that does not depend on $s$ until it posts a price $P_t \in [0,\frac{1}{4}]$ for the $(k+1)$-th time, since before that it has received feedback that does not depend on $s$.
If the algorithm never posts a price $P_t \in [0,\frac{1}{4}]$ for the $(k+1)$-th time, then the property remains true (e.g., by setting $E_{k+1} \coloneqq [\frac{c+d}{2},d]$).
If the algorithm posts this price $P_t$ for the $(k+1)$-th time, we set $E_{k+1} \coloneqq [\frac{c+d}{2},d]$ if $P_t < \frac{c+d}{2}$ and we set $E_{k+1} \coloneqq [c,\frac{c+d}{2}]$ if $P_t \ge \frac{c+d}{2}$.
Note that this price $P_t$ is the same for any $s \in E_{k+1}$, since it depends only on the feedback from times $1$ to $t-1$.
At this time $t$, for any $s \in E_{k+1}$, the feedback is the same.
Indeed, regardless of which $s \in E_{k+1}$ we might choose, this feedback is $\brb{\I\{s \le P_t\} , \I\{P_t \le b\}} = (0,1)$ if $P_t < \frac{c+d}{2}$, while it is $\brb{\I\{s \le P_t\} , \I\{P_t \le b\}} = (1 , 1)$ if $P_t \ge \frac{c+d}{2}$.
The feedback then remains the same for all the next times until a time $t'>t$ when the algorithm posts a price $P_{t'} \in [0,\frac{1}{4}]$ for the $(k+2)$-th time.
Hence the property is true at step $k+1$.
Note that our construction shows that we can even consider nested segments $E_0 \supseteq E_1 \supseteq E_2 \supseteq \cdots$.

There is a value $k^\star$ of $k \in \{0,\ldots,T\}$ such that the property is true, and for any $s \in E_{k^\star}$, the algorithm posts exactly $k^\star$ times a price in $[0,\frac{1}{4}]$ from times $1$ to $T$.
This is because if we write $t_0 = 0$ and for $k \ge 1$, $t_k$ for the time at which the algorithm posts for the $k$-th time a price in $[0,\frac{1}{4}]$ (for any $s \in E_{k}$), then we have $t_0 \le t_1 \le \cdots $.
Hence, we let $k^\star$ be the largest $k$ such that $t_k \le T$. 

For any $s \in E_{k^\star}$ the instantaneous regret for the time steps where $P_t \in [0,\frac{1}{4}]$ is lower bounded as
\[
\frac{b-s}{2}
-
(P_t - s)_+ 
\ge \frac{3}{8}
-
\frac{1}{4}
=
\frac{1}{8} \;. 
\]
Then let $\cT$ be the set of time steps in $\{1,\ldots,T\}$ when $P_t \in (\frac{1}{4},1]$.
Note that $(P_t)_{t \in \cT}$ does not depend on which $s \in E_{k^\star}$ we might have chosen.
The $\fgft$ function reaches its maximum value $\frac{b-s}{2}$ at $\frac{b+s}{2} \in \lsb{\frac{1}{2},\frac{5}{8}}$, and decreases with slope one in the segments $\lsb{\frac{1}{4},\frac{b+s}{2}}$ and $\lsb{\frac{b+s}{2} , 1}$.
Hence, the $\fgft$ function at any $p \in (\frac{1}{4},1]$ is equal to $ \frac{b-s}{2} - \labs{ p - \frac{b+s}{2} } $. 

Hence the cumulative regret for the steps in $\cT$ is lower bounded by
\[
\sum_{t \in \cT}
 \labs{ P_t - \frac{b+s}{2} }
 \ge 
  \labs{
 \sum_{t \in \cT}
 \left(
 P_t - \frac{b+s}{2}
 \right)}
=
|\cT|
  \labs{
  \left(
  \frac{1}{|\cT|}
 \sum_{t \in \cT}
  P_t 
  -
  \frac{b}{2}
  \right)
  -
  \frac{s}{2}
}
.
\]
Since the prices $P_1,\dots,P_T$ chosen by the algorithm are the same for any $s \in E_{k^\star}$, 
then $  \frac{1}{|\cT|}
 \sum_{t \in \cT}
  P_t   -
  \frac{b}{2}$ does not depend on $s$ and thus
we can move $s$ in the segment of length $\frac{1}{4} 2^{-k^*}$, and find a value of $s$ for which
 $\labs{
  \left(
  \frac{1}{|\cT|}
 \sum_{t \in \cT}
  P_t    -
  \frac{b}{2}
  \right)
  -
  \frac{s}{2}
} \ge \frac{1}{16} 2^{-k^*}$.
Hence, for this $s$ the cumulative regret of the steps in $\cT$ is lower bounded by
\[
|\cT| \cdot \frac{1}{16} 2^{-k^*}.
\]
In the end, the total cumulative regret is lower bounded by
\[
\frac{k^*}{8} + \frac{1}{16} (T - k^*) 2^{-k^*}. 
\]
If $k ^* \ge \frac{T}{2}$, we can lower bound the total cumulative regret by $\frac{T}{16}$. 
Else, we can lower bound the total cumulative regret by 
\[
\frac{k^*}{8} + \frac{1}{32} T 2^{-k^*}. 
\]
If $2^{k^*} \le \frac{T}{\log_2 (T)}$ then the total cumulative regret is lower bounded  by $ \frac{\log_2 (T)}{32} $.
Else, we have 
\[
2^{k^*} \ge \frac{T}{\log_2 (T)}
\ge \sqrt{T} \;,
\]
where in the last inequality we used $T \ge 17$, and hence $k^* \ge \log_2(\sqrt{T}) = \frac{\log_2(T)}{2} $.
In the end, the lower bound is 
\[
\min \lcb{\frac{T}{16} , \frac{\log_2 (T)}{32}}
=
\frac{\log_2 (T)}{32}. \qedhere
\]

\section{Pseudocode of Double Binary Search Pricing and proof of Theorem \ref{thm:UB:deterministic}}
\label{s:appe-proof:thm:UB:deterministic}

The pseudocode of Double Binary Search Pricing is given in \cref{a:two:bit:deterministic}.
In the algorithm and in the proof, we use the following notation: for a segment of the form $E \coloneqq [c,d]$ with $c < d$, we define $\mathrm{left}(E) \coloneqq c$, $\mathrm{mid}(E) \coloneqq \frac{c+d}{2}$ and $\mathrm{right}(E) \coloneqq d$.

\begin{algorithm} 
\textbf{Input}: horizon $T \in \N$\;
\textbf{Initialization}: Set $E_S \coloneqq E_B \coloneqq [0,1] $. If $2 \lceil \log_2(T) \rceil + 1 \le T$ set $N \coloneqq \lceil \log_2(T) \rceil$, else set $N \coloneqq 0$\;
\For
{%
    time $t=1,2,\dots, N$
}
{
    Post price $P_t \coloneqq \mathrm{mid}(E_S)$\;

    \If{ $s \le P_t$ }
    {
        Set $E_S \coloneqq [ \mathrm{left}(E_S) , \mathrm{mid}(E_S) ]$\;
    }
    \Else
    {
        Set $E_S \coloneqq [ \mathrm{mid}(E_S) , \mathrm{right}(E_S) ]$\;
    }     
}
\For
{%
    time $t=N+1,N+2,\dots, 2N$
}
{
    Post price $P_t \coloneqq \mathrm{mid}(E_B)$\;

    \If{  $P_t \le b$ }
    {
        Set $E_B \coloneqq [ \mathrm{mid}(E_B) , \mathrm{right}(E_B) ]$\;
    }
    \Else
    {
        Set $E_B \coloneqq [ \mathrm{left}(E_B) , \mathrm{mid}(E_B) ]$\;
    }   
}
\For
{%
    time $t=2 N +1 , \ldots , T$
}
{
    Post price $P_t \coloneqq 
    \frac{\mathrm{mid}(E_S) + \mathrm{mid}(E_B)}{2}$\;
}
\caption{Double Binary Search Pricing - Deterministic setting}
\label{a:two:bit:deterministic}
\end{algorithm}

We now present the full proof of \Cref{thm:UB:deterministic}.

\begin{proof}
If $T <2 \lceil \log_2(T) \rceil + 1$, the result is true because \fgft\ takes values in $[0,1]$ and so the instantaneous regret at any time is bounded by $1$.
Consider then the case $2 \lceil \log_2(T) \rceil + 1 \le T$. 

The cumulative regret from time steps $1$ to $2N$ is bounded by $2N =2\lceil \log_2(T)\rceil$ because $\fgft \in [0,1]$.
It is easy to see that the property that $s \in E_S$ is maintained from time steps $1$ to $N$.
At time $N$ (after which $E_S$ is left unchanged), the length of $E_S$ is smaller than or equal to $2^{-N} \le \frac{1}{T}$.
Similar properties hold for $b$ and $E_B$. Hence, at time steps $2N + 1 , \ldots , T$, we have that
\begin{equation} \label{eq:deterministic:dist:price}
\left|
 \frac{\mathrm{mid}(E_S) + \mathrm{mid}(E_B)}{2}
 -
 \frac{s+b}{2}
 \right|
 \le 
 \frac{1}{2}
 \brb{ 
\left| s - \mathrm{mid}(E_S)  \right|
+
\left| b - \mathrm{mid}(E_B)  \right|
 }
 \le 
 \frac{1}{T}. 
\end{equation}
Therefore, the individual regret at each one of these time steps is bounded by
\begin{equation}
\label{eq:deterministic:indiv:regret}
\frac{b-s}{2}
- 
\min 
\lcb{
\left(
 \frac{\mathrm{mid}(E_S) + \mathrm{mid}(E_B)}{2}
 -s
\right)_+
,
\left(
b
-
 \frac{\mathrm{mid}(E_S) + \mathrm{mid}(E_B)}{2}
\right)_+
}.
\end{equation}
If $b - s \le \frac{2}{T}$ this quantity is smaller than or equal to $\frac{1}{T}$. If $b - s >  \frac{2}{T}$, then from \eqref{eq:deterministic:dist:price}, we have
\[
 \frac{\mathrm{mid}(E_S) + \mathrm{mid}(E_B)}{2}
 -s
 \ge 
 \frac{b-s}{2}
 -
 \left|
 \frac{\mathrm{mid}(E_S) + \mathrm{mid}(E_B)}{2}
 -
 \frac{s+b}{2}
 \right| 
\ge 
 \frac{b-s}{2}
 -
 \frac{1}{T}. 
\]
Similarly 
\[
b 
-
 \frac{\mathrm{mid}(E_S) + \mathrm{mid}(E_B)}{2}
 \ge 
 \frac{b-s}{2}
 -
 \left|
 \frac{\mathrm{mid}(E_S) + \mathrm{mid}(E_B)}{2}
 -
 \frac{s+b}{2}
 \right| 
\ge 
 \frac{b-s}{2}
 -
 \frac{1}{T}. 
\]
Hence, the quantity in \ref{eq:deterministic:indiv:regret} is smaller than or equal to $\frac{1}{T}$, and consequently, the cumulative regret from steps $2N+1$ to $T$ is smaller than or equal to $\frac{T - 2N}{T} \le 1$.
Putting everything together, the cumulative regret from steps $1$ to $T$ is smaller than or equal to $1 + 2\lceil \log_2(T) \rceil$. 
\end{proof}

\section{Proof of Theorem \ref{thm:UB:stochastic-full}}
\label{s:appe-proof-thm:UB:stochastic-full}

For notational convenience, let $(S,B)$ be another pair of seller/buyer valuations, independent of the whole sequence $(S_t,B_t)_{t \in \N}$, and sharing the same distribution of any element in this i.i.d.\ sequence.
The proof shares ideas with the proof of \cite[Theorem~1]{cesa2023bilateral}.
For each $p \in [0,1]$, notice that, as a consequence of \Cref{eq:convolution-1} in \Cref{l:fair_decomposition}, we have
\[
    \E\bsb{\fgft(p,S,B)}
=
    \int_0^1 \Pb\bsb{ \{S \le p-u\} \cap \{p+u \le B\} } \dif u
\eqqcolon
    G(p)\;,
\]
and, for each $t \in \N$, also
\[
    \frac{1}{t} \sum_{i=1}^t \fgft(p,S_t,B_t)
=
    \int_0^1 \frac{1}{t} \sum_{i=1}^t \I\{ S_i \le p - u\}\I\{ p+u \le B_i \} \dif u
\eqqcolon
    \widehat{G}_t(p)
\]
We have, for any $p \in [0,1]$ and any $t \in \N$,
\begin{align*}
\left| 
G(p) - \widehat{G}_t(p)
\right|
= & 
\left|  
\int_0^1 \Pb\bsb{ \{S \le p-u\} \cap \{-B \le -p-u\} } \dif u  
- 
\int_0^1 \frac{1}{t} \sum_{i=1}^t \I\{ S_i \le p - u\} \I\{-B_i \le -p -u \} \dif u
\right|
\\ 
\le &
\sup_{x,y \in \bbR}
\left| 
\Pb\bsb{ \{S \le x\} \cap \{-B \le y\} }
-
\frac{1}{t} \sum_{i=1}^t \I\{ S_i \le x  ,-  B_i \le y \} 
\right|.
 \end{align*}
Hence, 
we can apply the two-dimensional DKW inequality, see \cite[Theorem 15, Appendix J]{cesa2023bilateral}, from which it follows that there are positive constants $m_0 \le 1200$, $c_1 \le 13448$ and $c_2 \ge 1/576$ such that for all $\varepsilon > 0$, if $t \ge m_0 / \varepsilon^2$,
\[
\Pb 
\bsb{
\left| 
G(p) - \widehat{G}_t(p)
\right|
\ge \varepsilon 
}
\le
\Pb \lsb{\sup_{x,y \in \bbR}
\left| 
\Pb\bsb{ \{S \le x\} \cap \{-B \le y\} }
-
\frac{1}{t} \sum_{i=1}^t \I\{ S_i \le x  ,-  B_i \le y \} 
\right| \ge \e}
\le
c_1 e^{- c_2 t \varepsilon^2}.
\]
Hence, for each $t \in \N$, by Fubini's theorem, we have that
\begin{align*}
    \mathbb{E}
    \lsb{
\sup_{p \in [0,1]} 
\left|
G(p) - \widehat{G}_t(p)
\right|
    }
&= 
\int_{0}^{\infty} 
  \Pb
    \lsb{
\sup_{p \in [0,1]} 
\left|
G(p) - \widehat{G}_t(p)
\right|
\ge \varepsilon
    }
    \dif \varepsilon
\le 
    \sqrt{\frac{m_0}{t}}
    +
    \int_{\sqrt{\frac{m_0}{t}}}^\infty 
    c_1 \exp(- c_2 t \varepsilon^2)
      \dif \varepsilon
    \\ 
    &\le 
    \sqrt{
        \frac{m_0}{t}
    }
+
    \frac{c_1}{2}
\int_{-\infty}^\infty 
 \exp(- c_2 t \varepsilon^2) 
      \dif \varepsilon
= 
          \sqrt{\frac{m_0}{t}}
        +
          \frac{c_1}{2}
      \frac{\sqrt{2 \pi }}{\sqrt{2 c_2t}}
= 
        \lrb{\sqrt{
            m_0
        }
        +
           \frac{c_1 \sqrt{\pi} }{2 \sqrt{c_2}}}\frac{1}{\sqrt{t}}.
\end{align*}
Then, for any $t \in \N$, we have, using the law of total expectation conditioned on  $(S_i,B_i)_{i=1,\ldots,t}$,
\begin{align*}
&\E \bsb{ \fgft(p^\star, S_{t+1}, B_{t+1}) }
    -
    \E \bsb{ \fgft(P_{t+1}, S_{t+1}, B_{t+1}) }
=
 G(p^\star) 
 -
     \E \bsb{G(P_{t+1})} 
\\
&\qquad=
      G(p^\star)  
      -
\E \bsb{ 
\widehat{G}_t(p^\star)
}
 +
 \E \bsb{ 
 \widehat{G}_t(p^\star)
 -
 \widehat{G}_t(P_{t+1})
 } 
 +
 \E \bsb{ 
  \widehat{G}_t(P_{t+1})
  -
  G(P_{t+1})
  } 
\\
&\qquad\le
       G(p^\star)  
      -
\E \bsb{ 
\widehat{G}_t(p^\star)
}
+
 \E \bsb{ 
  \widehat{G}_t(P_{t+1})
  -
  G(P_{t+1})
  } 
\le
  2
  \E 
  \lsb{
  \sup_{p \in [0,1]}
  \left| 
  \widehat{G}(p) - G(p)
  \right|} 
\le 
    2\lrb{\sqrt{
            m_0
        }
        +
           \frac{c_1 \sqrt{\pi} }{2 \sqrt{c_2}}}\frac{1}{\sqrt{t}}
        \;.
\end{align*}
Hence, for any $T \ge 2$,
\begin{align*}
    R_T 
&\le 
    1 + 
    2\lrb{\sqrt{
            m_0
        }
        +
           \frac{c_1 \sqrt{\pi} }{2 \sqrt{c_2}}}\sum_{t=2}^{T}\frac{1}{\sqrt{t-1}} 
\le 
           1+ 
           4\lrb{\sqrt{
            m_0
        }
        +
           \frac{c_1 \sqrt{\pi} }{2 \sqrt{c_2}}}\sqrt{T-1} \;.
\end{align*}

\end{document}